\pgfplotsset{compat=newest}
\newcommand*{\citet}[1]{\AtNextCite{\AtEachCitekey{\defcounter{maxnames}{999}}}\textcite{#1}}
\newcommand*{\citep}[1]{\cite{#1}}
\title{Counting Distinct Elements Under Person-Level Differential Privacy}
\author{
    Alexander Knop\thanks{Alphabetical author order.} \\
    Google \\
    \texttt{alexanderknop@google.com}
    \And 
    Thomas Steinke\footnotemark[1]\\
    Google DeepMind\\
    \texttt{steinke@google.com}
}
\newtheorem{theorem}{Theorem}[section]
\newtheorem{definition}[theorem]{Definition}
\newtheorem{lemma}[theorem]{Lemma}
\newcommand{\R}{ \mathbb{R} }
\newcommand{\N}{ \mathbb{N} }
\newcommand{\mechanism}{ \mathcal{M} }
\newcommand{\database}{ D }
\newcommand{\distinctelementssymbol}{ \mathrm{DC} }
\newcommand{\distinctelements}[1]{ \distinctelementssymbol(#1) }
\newcommand{\bdistinctelements}[2]{ \distinctelementssymbol(#1; #2) }
\newcommand{\approxbdistinctelements}[2]{ \widehat{\distinctelementssymbol}(#1; #2) }
\newcommand{\laplace}[1]{ \mathrm{Lap}\left({#1}\right) }
\newcommand{\ex}[2]{{\ifx&#1& \mathbb{E} \else
\underset{#1}{\mathbb{E}} \fi \left[#2\right]}}
\newcommand{\pr}[2]{{\ifx&#1& \mathbb{P} \else
\underset{#1}{\mathbb{P}} \fi \left[#2\right]}}
\newcommand{\ellmax}{\ell_{\mathrm{max}}}
\DeclareMathOperator*{\argmax}{arg\,max}
\begin{document}

\maketitle

\begin{abstract}
    We study the problem of counting the number of distinct elements in a dataset subject to the constraint of differential privacy. 
We consider the challenging setting of person-level DP (a.k.a.~user-level DP) where each person may contribute an unbounded number of items and hence the sensitivity is unbounded.

Our approach is to compute a bounded-sensitivity version of this query, which reduces to solving a max-flow problem. 
The sensitivity bound is optimized to balance the noise we must add to privatize the answer against the error of the approximation of the bounded-sensitivity query to the true number of unique elements.
\end{abstract}

\section{Introduction}
An elementary data analysis task is to count the number of distinct elements occurring in a dataset.
The dataset may contain private data and even simple statistics can be combined to leak sensitive information about people~\citep{dinur2003revealing}.
Our goal is to release (an approximation to) this count in a way that ensures the privacy of the people who contributed their data.
As a motivating example, consider a collection of internet browsing histories, in which case the goal is to compute the total number of websites that have been visited by at least one person. 

Differential privacy (DP)~\citep{dwork2006calibrating} is a formal privacy standard. The simplest method for ensuring DP is to add noise (from either a Laplace or Gaussian distribution) to the true answer, where the scale of the noise corresponds to the sensitivity of the true answer -- i.e., how much one person's data can change the true value.

If each person contributes a single element to the dataset, then the sensitivity of the number of unique elements is one. However, a person may contribute multiple elements to the dataset and our goal is to ensure privacy for all of these contributions simultaneously. That is, we seek to provide person-level DP (a.k.a.~user-level DP\footnote{We prefer the term ``person'' over ``user,'' as the latter only makes sense in some contexts and could be confusing in others.}).

This is the problem we study:
We have a dataset $D = (u_1, u_2, \cdots, u_n)$ of person records.
Each person $i \in [n]$ contributes a finite dataset $u_i \in \Omega^*$, where $\Omega$ is some (possibly infinite) universe of potential elements (e.g., all finite-length binary strings) and $\Omega^* := \bigcup_{\ell\in\mathbb{N}} \Omega^\ell$ denotes all subsets of $\Omega$ of finite size.
Informally, our goal is to compute the number of unique elements
\begin{equation}
    \distinctelements{\database} := \left|\bigcup_{i \in [n]} u_i \right| \label{eq:dc}
\end{equation}
in a way that preserves differential privacy.
A priori, the sensitivity of this quantity is infinite, as a single person can contribute an unbounded number of unique elements.

In particular, it is not possible to output a meaningful upper bound on the number of distinct elements subject to differential privacy. This is because a single person could increase the number of distinct elements arbitrarily and differential privacy requires us to hide this contribution. It follows that we cannot output a differentially private unbiased estimate of the number of distinct elements with finite variance.
However, it is possible to output a lower bound. Thus our formal goal is to compute a high-confidence lower bound on the number of distinct elements that is as large as possible and which is computed in a differentially private manner.

\subsection{Our Contributions}
Given a dataset $\database = (u_1, \cdots, u_n) \in (\Omega^*)^n$ and an integer $\ell \ge 1$, 
we define
\begin{equation}
    \bdistinctelements{\database}{\ell} := 
    \max\left\{ 
        \left| 
            \bigcup_{i \in [n]} v_i 
        \right| : 
        \forall i \in [n] ~~ v_i \subseteq u_i \land |v_i| \le \ell 
    \right\}.
    \label{eq:dcdl}
\end{equation}
That is, $\bdistinctelements{\database}{\ell}$ is the number of distinct elements if we restrict each person's contribution to $\ell$ elements. We take the maximum over all possible restrictions.

It is immediate that $\bdistinctelements{\database}{\ell} \le \distinctelements{\database}$ for all $\ell \ge 1$. Thus we obtain a lower bound on the true number of unique elements.
The advantage of $\bdistinctelements{\database}{\ell}$ is that its sensitivity is bounded by $\ell$ (see \Cref{lem:dcdl:sensitivity} for a precise statement) and, hence, we can estimate it in a differentially private manner. Specifically, 
\begin{equation}
    \mechanism_{\ell,\varepsilon}(\database) := 
    \bdistinctelements{\database}{\ell} + \laplace{\ell/\varepsilon} 
\end{equation}
defines an $\varepsilon$-DP algorithm $M_{\ell,\varepsilon} : (\Omega^*)^n \to \mathbb{R}$, where $\laplace{b}$ denotes Laplace noise scaled to have mean $0$ and variance $2b^2$.
This forms the basis of our algorithm. Two challenges remain: Setting the sensitivity parameter $\ell$ and computing $\bdistinctelements{\database}{\ell}$ efficiently.

To obtain a high-confidence lower bound on the true distinct count, we must compensate for the Laplace noise, which may inflate the reported value.
We can obtain such a lower bound from $\mechanism_\ell(\database)$ using the cumulative distribution function (CDF) of the Laplace distribution: That is, $\forall b>0 ~\forall \beta \in (0,1/2] ~~ \pr{}{\laplace{b} \ge b \cdot \log\left(\frac{1}{2\beta}\right)}=\beta$, so
\begin{align}
    \pr{}{\underbrace{\mechanism_{\ell,\varepsilon}(\database) - \frac{\ell}{\varepsilon} \cdot \log\left(\frac{1}{2\beta}\right)}_{\text{lower bound}} \le 
    \distinctelements{\database}}
    &\ge
    \underbrace{1-\beta}_{\text{confidence}}.\label{eq:intro:lb:confidence}
\end{align}

\paragraph{Choosing the sensitivity parameter $\ell$.}
Any choice of $\ell \ge 1$ gives us a lower bound: $\bdistinctelements{\database}{\ell} \le \distinctelements{\database}$.
Since $\forall \database ~ \lim_{\ell \to \infty} \bdistinctelements{\database}{\ell} = \distinctelements{\database}$, this lower bound can be arbitrarily tight.
However, the larger $\ell$ is, the larger the sensitivity of $\bdistinctelements{\database}{\ell}$ is. That is, the noise we add scales linearly with $\ell$.

Thus there is a bias-variance tradeoff in the choice of $\ell$. 
To make this precise, suppose we want a lower bound on $\distinctelements{\database}$ with confidence $1 - \beta \in [\frac12,1)$, as in \Cref{eq:intro:lb:confidence}.
To obtain the tightest possible lower bound with confidence $1-\beta$, we want $\ell$ to maximize the expectation
\begin{equation}
    q(\database;\ell) := \bdistinctelements{\database}{\ell} -\frac{\ell}{\varepsilon} \cdot \log\left(\frac{1}{2\beta}\right) = \ex{\mechanism_{\ell,\varepsilon}}{\mechanism_{\ell,\varepsilon}(\database) - \frac{\ell}{\varepsilon} \cdot \log\left(\frac{1}{2\beta}\right)}.
\end{equation}
We can use the exponential mechanism \citep{mcsherry2007mechanism} to privately select $\ell$ that approximately maximizes $q(\database; \ell)$.
However, directly applying the exponential mechanism is problematic because each score has a different sensitivity -- the sensitivity of $q(\cdot; \ell)$ is $\ell$. Instead, we apply the Generalized Exponential Mechanism (GEM) of \citet{RS15:lipschitz-extensions} (see \Cref{algorithm:generalized-exponential-mechanism}). Note that we assume some a priori maximum value of $\ell$ is supplied to the algorithm; this is $\ellmax$.

Our main algorithm attains the following guarantees.

\begin{theorem}[Theoretical Guarantees of Our Algorithm]\label{thm:main-matching}
    Let $\varepsilon>0$ and $\beta \in (0,\frac{1}{2})$ and $\ellmax \in \N$.
    Define $\mechanism : (\Omega^*)^* \to \N \times \R$ to be $\mechanism(\database)=\Call{DPDistinctCount}{\database;\ellmax,\varepsilon,\beta}$ from \Cref{algorithm:matching}.
    Then $\mechanism$ satisfies all of the following properties.
    \begin{itemize}
        \item \textbf{Privacy:} $\mechanism$ is $\varepsilon$-differentially private.
        \item \textbf{Lower bound:} For all $\database \in (\Omega^*)^n$,
        \begin{equation}
            \pr{(\hat\ell,\hat\nu) \gets \mechanism(\database)}{\hat\nu \le \distinctelements{\database}} \ge 1-\beta.\label{eq:thm:main1}
        \end{equation}
        \item \textbf{Upper bound:} For all $\database \in (\Omega^*)^n$, 
        \begin{equation}
                \pr{(\hat\ell,\hat\nu) \gets \mechanism(\database)}{\hat\nu \ge 
                    \max_{\ell \in [\ellmax]} \bdistinctelements{\database}{\ell} - \frac{10\ell+18\ell_A^*}{\varepsilon} \log\left(\frac{\ellmax}{\beta}\right) } \ge 1 - 2\beta,\label{eq:main:upper}
        \end{equation}
        where $\ell_A^* = \argmax_{\ell \in [\ellmax]} \bdistinctelements{D}{\ell} - \frac{\ell}{\varepsilon} \log\left(\frac{1}{2\beta}\right)$.
        \item \textbf{Computational efficiency:} 
            $\mechanism(\database)$ has running time $O\left(|\database|^{1.5} \cdot \ellmax^2 \right)$, where $|\database|:=\sum_i |u_i|$.
    \end{itemize}
\end{theorem}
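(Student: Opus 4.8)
The plan is to verify the four claimed properties separately, relying on the fact (\Cref{lem:dcdl:sensitivity}) that $\database\mapsto\bdistinctelements{\database}{\ell}$ has sensitivity at most $\ell$. First I would recall the shape of \Cref{algorithm:matching}: it splits the budget $\varepsilon=\varepsilon_1+\varepsilon_2$ (e.g.\ $\varepsilon_1=\varepsilon_2=\varepsilon/2$), runs GEM (\Cref{algorithm:generalized-exponential-mechanism}) at level $\varepsilon_1$ over the candidates $\ell\in[\ellmax]$ with scores a budget-rescaled version of $q(\database;\ell)=\bdistinctelements{\database}{\ell}-\frac{\ell}{\varepsilon_1}\log\frac{1}{2\beta}$ (whose per-candidate sensitivity is $\Delta_\ell=\ell$), obtains $\hat\ell$, and returns $(\hat\ell,\hat\nu)$ with $\hat\nu=\bdistinctelements{\database}{\hat\ell}+\laplace{\hat\ell/\varepsilon_2}-\frac{\hat\ell}{\varepsilon_2}\log\frac{1}{2\beta}$. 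For privacy: GEM with sensitivities $\Delta_\ell=\ell$ is $\varepsilon_1$-DP; conditioned on $\hat\ell$ the second phase adds $\laplace{\hat\ell/\varepsilon_2}$ to a query of sensitivity $\le\hat\ell$ (hence $\varepsilon_2$-DP) and then subtracts a constant depending only on $\hat\ell$ (post-processing); basic composition over the two phases yields $\varepsilon$-DP.

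For the lower bound, fix any value of $\hat\ell$. Since $\bdistinctelements{\database}{\hat\ell}\le\distinctelements{\database}$, the event $\{\hat\nu>\distinctelements{\database}\}$ is contained in $\{\laplace{\hat\ell/\varepsilon_2}>\frac{\hat\ell}{\varepsilon_2}\log\frac{1}{2\beta}\}$, which has probability exactly $\beta$ by the Laplace tail identity quoted just before \Cref{eq:intro:lb:confidence}. Since this holds for every value of $\hat\ell$, averaging over $\hat\ell$ gives \Cref{eq:thm:main1}.

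The upper bound is the crux. Write $L=\log(\ellmax/\beta)$. I would combine two high-probability events via a union bound (this is where the $1-2\beta$ comes from). First, GEM's utility guarantee: with probability $\ge1-\beta$, the selection $\hat\ell$ satisfies, simultaneously for every $\ell\in[\ellmax]$, $q(\database;\hat\ell)\ge q(\database;\ell)-a\cdot\frac{\ell+\hat\ell}{\varepsilon_1}L$ for an absolute constant $a$. Second, the noise lower tail: $\laplace{\hat\ell/\varepsilon_2}\ge-\frac{\hat\ell}{\varepsilon_2}\log\frac{1}{2\beta}$ with probability $\ge1-\beta$, so on this event $\hat\nu\ge\bdistinctelements{\database}{\hat\ell}-\frac{2\hat\ell}{\varepsilon_2}\log\frac{1}{2\beta}$. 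The key step is to apply the GEM bound at $\ell=\ell_A^*$ and use that $\ell_A^*$ maximizes $q(\database;\cdot)$: this forces $\hat\ell=O(\ell_A^*)$, which is exactly what allows the $\hat\ell$-dependent slack to be charged to $\ell_A^*$ rather than to the trivial bound $\ellmax$. Then, unfolding $q(\database;\ell)=\bdistinctelements{\database}{\ell}-\frac{\ell}{\varepsilon_1}\log\frac{1}{2\beta}$, substituting $\hat\ell=O(\ell_A^*)$, and collecting constants (the factor from the budget split, and absorbing every $\log\frac{1}{2\beta}$ correction into $L$) converts the inequality $\hat\nu\ge\bdistinctelements{\database}{\ell}-O(\tfrac{\ell}{\varepsilon})L-O(\tfrac{\hat\ell}{\varepsilon})L$ (valid for all $\ell$) into \Cref{eq:main:upper} with the stated coefficients $10\ell$ and $18\ell_A^*$. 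I expect the main obstacle to be precisely this bookkeeping: feeding GEM the correct sorted sensitivity sequence and combining the three sources of slack (GEM's slack, the downward noise deviation, the CDF corrections) to land on the exact constants.

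For computational efficiency, I would show that $\bdistinctelements{\database}{\ell}$ equals the value of a maximum flow on the bipartite network with a node per person and per distinct element, a capacity-$\ell$ edge from the source to each person-node, an edge $i\to e$ of capacity $1$ (or $\infty$) whenever $e\in u_i$, and a capacity-$1$ edge from each element-node to the sink: by integrality of max-flow and a short exchange argument, an integral optimal flow selects for each person $i$ a set $v_i\subseteq u_i$ with $|v_i|\le\ell$ whose union has size equal to the flow value, and any feasible choice of such sets gives a feasible flow of the corresponding value. The network has $O(|\database|)$ vertices and edges, and the sink-edge capacities being $1$ (together with source capacities $\le\ellmax$) lets a blocking-flow/Dinic-type routine solve each instance in $O(|\database|^{1.5}\ellmax)$ time; running this for all $\ell\in[\ellmax]$, with GEM and the single Laplace draw being lower-order, gives total running time $O(|\database|^{1.5}\ellmax^2)$. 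This last step is routine given the reduction, so the bulk of the write-up effort will be on the upper bound.
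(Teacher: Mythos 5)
Your proposal follows the paper's proof on all four items: the same $\varepsilon/2+\varepsilon/2$ split with basic composition for privacy, the same conditional Laplace-tail argument for the lower bound, the same union bound of the GEM utility guarantee with the Laplace lower tail for the upper bound, and the same reduction for the running time (the paper runs Hopcroft--Karp on the $\ell$-fold copied bipartite graph; your unit-capacity blocking-flow network is the same algorithm in different clothes, and both give $O(|\database|^{1.5}\ellmax)$ per value of $\ell$ and $O(|\database|^{1.5}\ellmax^2)$ overall).

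The one place where your sketch is not merely deferring bookkeeping is the step ``$\ell_A^*$ maximizes $q(\database;\cdot)$, hence $\hat\ell=O(\ell_A^*)$.'' Neither half of that is literally true. The $\ell_A^*$ in the theorem statement is the argmax of $\bdistinctelements{\database}{\ell}-\frac{\ell}{\varepsilon}\log\frac{1}{2\beta}$, whereas the algorithm's score $q_\ell$ penalizes by $\frac{2\ell}{\varepsilon}\log\frac{1}{2\beta}$; these argmaxes need not coincide. And the bound one actually obtains is not $\hat\ell\le C\ell_A^*$ for an absolute constant $C$. The paper's device is to define $\ell_A^*=\argmax_\ell \bdistinctelements{\database}{\ell}-\frac{A\ell}{\varepsilon}$ with a \emph{free} parameter $A$, use the linear majorant $\bdistinctelements{\database}{\ell}\le\bdistinctelements{\database}{\ell_A^*}+(\ell-\ell_A^*)\frac{A}{\varepsilon}$ that the argmax property supplies, and feed it into the GEM event; the coefficient of $\hat\ell$ that comes out is $2\log\frac{1}{2\beta}-A$, so the algebra only closes for $A<2\log\frac{1}{2\beta}$, and the choice $A=\log\frac{1}{2\beta}$ yields
\begin{equation*}
  \hat\ell\cdot\log\left(\tfrac{1}{2\beta}\right)\le \ell_A^*\cdot\left(\log\left(\tfrac{1}{2\beta}\right)+8\log\left(\tfrac{\ellmax}{\beta}\right)\right),
\end{equation*}
a multiplicative factor that grows like $\log(\ellmax/\beta)/\log(1/2\beta)$ rather than a universal constant. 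The coefficients $10$ and $18$ emerge only after additionally bounding $\log\frac{1}{2\beta}\le\log\frac{\ellmax}{\beta}$. So the ``bookkeeping'' you defer contains one genuine idea --- the tunable $A$ and the constraint on it --- without which the $\hat\ell$-dependent slack cannot be charged to $\ell_A^*$ at all; the rest of your outline matches the paper.
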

The upper bound guarantee \eqref{eq:main:upper} is somewhat difficult to interpret. However, if the number of items per person is bounded by $\ell_*$, then we can offer a clean guarantee:
If $D = (u_1, \cdots, u_n) \in (\Omega^*)^n$ satisfies $\max_{i\in[n]} |u_i| \le \ell_* \le \ellmax$, then combining the upper and lower bounds of \Cref{thm:main-matching} gives
\begin{equation}\label{eq:combined-thm-matching}
    \pr{(\hat\ell,\hat\nu) \gets \mechanism(\database)}{\distinctelements{\database} \ge \hat\nu \ge 
    \distinctelements{\database} -  \frac{28\ell_*}{\varepsilon} \log\left(\frac{\ellmax}{\beta}\right) } \ge 1 - 3\beta.
\end{equation}
Note that $\ell_*$ is not assumed to be known to the algorithm, but the accuracy guarantee is able to adapt. We only assume $\ell_* \le \ellmax$, where $\ellmax$ is the maximal sensitivity considered by the algorithm.

In addition to proving the above theoretical guarantees, we perform an experimental evaluation of our algorithm.

\begin{algorithm}
    \caption{Distinct Count Algorithm}\label{algorithm:matching}
    \begin{algorithmic}[1]
        \Procedure{SensitiveDistinctCount}{$\database\!=\!(u_1, \cdots, u_n)\!\in\!(\Omega^*)^n$;  $\ell\!\in\!\mathbb{N}$}
            \Comment{$\bdistinctelements{D}{\ell}$}
            \State{Let $U_\ell = \bigcup_{i \in [n]} \big( \{i\} \times [\min\{\ell,|u_i|\}] \big) \subset [n] \times [\ell]$.}
            \State{Let $V=\bigcup_{i \in [n]} u_i \subset \Omega$.}
            \State{Define $E_\ell \subseteq U \times V$ by $((i, j), v) \in E \iff v \in u_i$.}
            \State{Let $G_\ell$ be a bipartite graph with vertices partitioned into $U_\ell$ and $V$ and edges $E_\ell$.}
            \State{$m_\ell \gets \Call{MaximumMatchingSize}{G}$.} \Comment{\citep{HK73:maximum-matching,Kar73:maximum-matching}}
            \State{\Return{$m_\ell\in\mathbb{N}$}}
        \EndProcedure
        \Procedure{DPDistinctCount}{$\database\!=\!(u_1, \cdots, u_n)\!\in\!(\Omega^*)^n$; $\ellmax\!\in\!\mathbb{N}$, $\varepsilon\!>\!0$, $\beta\!\in\!(0,\!\tfrac12)$}
            \For{$\ell \in [\ellmax]$}
                \State{Define $q_\ell(\database) := \Call{SensitiveDistinctCount}{\database;\ell} - \frac{2\ell}{\varepsilon} \cdot \log\left(\frac{1}{2\beta}\right)$.}
            \EndFor
            \State{$\hat{\ell} \gets
            \Call{GEM}{\database;\{q_\ell\}_{\ell \in [\ellmax]}, \{\ell\}_{\ell \in [\ellmax]}, \varepsilon / 2, \beta}$.}
            \Comment{\Cref{algorithm:generalized-exponential-mechanism}}
            \State{$\hat\nu \gets q_{\hat{\ell}}(D) + \laplace{2\hat{\ell} / \varepsilon}$.}
            \State{\Return{$(\hat{\ell}, \hat\nu) \in [\ellmax] \times \R$.}}
        \EndProcedure
    \end{algorithmic}
\end{algorithm}

\paragraph{Efficient computation.}
The main computational task for our algorithm is to compute $\bdistinctelements{\database}{\ell}$.
By definition \eqref{eq:dcdl}, this is an optimization problem. For each person $i \in [n]$, we must select a subset $v_i$ of that person's data $u_i$ of size at most $\ell$ so as to maximize the size of the union of the subsets $\left|\bigcup_{i \in [n]} v_i \right|$. 

We can view the dataset $D=(u_1,\cdots,u_n) \in (\Omega^*)^n$ as a bipartite graph. On one side we have the $n$ people and on the other side we have the elements of the data universe $\Omega$.\footnote{The data universe $\Omega$ may be infinite, but we can restrict the computation to the finite set $\bigcup_{i \in [n]} u_i$. Thus there are at most $n+\distinctelements{\database}\le n+|\database|$ item vertices in the graph.} There is an edge between $i \in [n]$ and $x \in \Omega$ if and only if $x \in u_i$.

We can reduce computing $\bdistinctelements{\database}{\ell}$ to a max-flow problem: Each edge in the bipartite graph has capacity one. We add a source vertex $s$ which is connected to each person $i \in [n]$ by an edge with capacity $\ell$. Finally we add a sink $t$ that is connected to each $x \in \Omega$ by an edge with capacity $1$. The max flow through this graph is precisely $\bdistinctelements{\database}{\ell}$.

Alternatively, we can reduce computing $\bdistinctelements{\database}{\ell}$ to bipartite maximum matching. For $\ell=1$, $\bdistinctelements{\database}{1}$ is exactly the maximum cardinality of a matching in the bipartite graph described above. For $\ell \ge 2$, we simply create $\ell$ copies of each person vertex $i \in [n]$ and then $\bdistinctelements{\database}{\ell}$ is the maximum cardinality of a matching in this new bipartite graph.\footnote{We need only create $\min\{\ell,|u_i|\}$ copies of the person $i \in [n]$. Thus the number of person vertices is at most $\min\{n\ell,|D|\}$.}

Using this reduction, standard algorithms for bipartite maximum matching \citep{HK73:maximum-matching,Kar73:maximum-matching} allow us to compute $\bdistinctelements{\database}{\ell}$ with $O(|\database|^{1.5}\cdot\ell)$ operations. We must repeat this computation for each $\ell \in [\ellmax]$.

\paragraph{Linear-time algorithm.}
\begin{algorithm}
    \caption{Linear-Time Approximate Distinct Count Algorithm}\label{algorithm:greedy}
    \begin{algorithmic}[1]
        \Procedure{DPApproxDistinctCount}{$\database\!\!=\!\!(u_1, \!\cdots\!, u_n)\!\in\!(\Omega^*)^n$; $\ellmax\!\in\!\mathbb{N}$, $\varepsilon\!>\!0$, $\beta\!\in\!(0,\!\tfrac12)$}
            \State{$S \gets \emptyset$.}
            \For{$\ell \in [\ellmax]$}
                \For{$i \in [n]$ with $u_i \setminus S \ne \emptyset$}
                    \State{Choose lexicographically first $v \in u_i \setminus S$.}\Comment{Match $(i,\ell)$ to $v$.}
                    \State{Update $S \gets S \cup \{v\}$.}
                \EndFor
                \State{Define $q_\ell(D) := |S| - \frac{2\ell}{\varepsilon} \cdot \log\left(\frac{1}{2\beta}\right)$.}\Comment{This loop computes $\{q_\ell(D)\}_{\ell \in [\ellmax]}$.\label{line:max-matching}}
            \EndFor
            \State{$\hat{\ell} \gets
            \Call{GEM}{\database;\{q_\ell\}_{\ell \in [\ellmax]}, \{\ell\}_{\ell \in [\ellmax]}, \varepsilon / 2, \beta}$.}
            \Comment{\Cref{algorithm:generalized-exponential-mechanism}}
            \State{$\hat\nu \gets q_{\hat{\ell}}(D) + \laplace{2\hat{\ell} / \varepsilon}$.}
            \State{\Return{$(\hat{\ell}, \hat\nu) \in [\ellmax] \times \R$.}}
        \EndProcedure
    \end{algorithmic}
\end{algorithm}
Our algorithm above is polynomial-time. However, for many applications the dataset size $|\database|$ is enormous.
Thus we also propose a linear-time variant of our algorithm. However, we must trade accuracy for efficiency.

There are two key ideas that differentiate our linear-time algorithm (\Cref{algorithm:greedy}) from our first algorithm (\Cref{algorithm:matching}) above:
First, we compute a maxim\emph{al} bipartite matching instead of a maxim\emph{um} bipartite matching.\footnote{To clarify the confusing terminology: A matching is a subset of edges such that no two edges have a vertex in common. A maximum matching is a matching of the largest possible size. A maximal matching is a matching such that no edge could be added to the matching without violating the matching property. A maximum matching is also a maximal matching, but the reverse is not true.} This can be done using a linear-time greedy algorithm and gives a 2-approximation to the maximum matching. (Experimentally we find that the approximation is better than a factor of 2.)
Second, rather than repeating the computation from scratch for each $\ell \in [\ellmax]$, we incrementally update our a maximal matching while increasing $\ell$.
The main challenge here is ensuring that the approximation to $\bdistinctelements{\database}{\ell}$ has low sensitivity -- i.e., we must ensure that our approximation algorithm doesn't inflate the sensitivity. Note that $\bdistinctelements{\database}{\ell}$ having low sensitivity does not automatically ensure that the approximation to it has low sensitivity.

 \begin{theorem}[Theoretical Guarantees of Our Linear-Time Algorithm]\label{thm:main-greedy}
     Let $\varepsilon>0$ and $\beta \in (0,\frac{1}{2})$ and $\ellmax \in \N$.
     Define $\mechanism : (\Omega^*)^* \to \N \times \R$ to be 
     $\widehat\mechanism(\database)=\Call{DPApproxDistinctCount}{\database; \ellmax, \varepsilon, \beta}$ from \Cref{algorithm:greedy}.
     Then $\widehat\mechanism$ satisfies all of the following properties.
     \begin{itemize}
         \item \textbf{Privacy:} $\widehat\mechanism$ is $\varepsilon$-differentially private.
         \item \textbf{Lower bound:} For all $\database \in (\Omega^*)^n$,
         \begin{equation}\label{eq:thm:greedy1}
             \pr{(\hat\ell,\hat\nu) \gets \widehat\mechanism(\database)}{\hat\nu \le \distinctelements{\database}} \ge 1 - \beta.
         \end{equation}
         \item \textbf{Upper bound:}
         If $D = (u_1, \cdots, u_n) \in (\Omega^*)^n$ satisfies $\max_{i\in[n]} |u_i| \le \ell_* \le \ellmax$, then
         \begin{equation}\label{eq:greedy:upper}
             \pr{(\hat\ell,\hat\nu) \gets \widehat\mechanism(\database)}{\hat\nu \ge 
             \frac12\distinctelements{\database} - \frac{28\ell_*}{\varepsilon} \log\left(\frac{\ellmax}{\beta}\right) } \ge 1 - 2\beta.
         \end{equation}
         \item \textbf{Computational efficiency:} 
             $\mechanism(\database)$ has running time $O\left( |\database| + \ellmax \log \ellmax \right)$, where $|\database|:=\sum_i |u_i|$.
     \end{itemize}
 \end{theorem}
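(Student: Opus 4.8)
The plan is to mirror the structure of the proof of Theorem~\ref{thm:main-matching}, with two modifications: handling the maximal-matching (as opposed to maximum-matching) approximation, and verifying that the incremental update in \Cref{algorithm:greedy} does not inflate sensitivity. Let me denote by $\widehat{m}_\ell(D) = |S|$ the size of the maximal matching produced after the $\ell$-th outer iteration, so $q_\ell(D) = \widehat m_\ell(D) - \frac{2\ell}{\varepsilon}\log\frac{1}{2\beta}$.

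First, \textbf{privacy.} The only thing that touches the data is the family $\{q_\ell\}_{\ell\in[\ellmax]}$ fed to GEM (with budget $\varepsilon/2$) and the final Laplace draw of scale $2\hat\ell/\varepsilon$ (budget $\varepsilon/2$); basic composition then gives $\varepsilon$-DP, exactly as in \Cref{algorithm:matching}. The content to check is that $q_\ell(\cdot)$ has sensitivity at most $\ell$ — i.e.\ that the \emph{greedy} quantity $\widehat m_\ell(\cdot)$ has sensitivity at most $\ell$ with respect to adding/removing one person $u_i$. Here I would argue directly about the greedy process: adding a person contributes at most $\ell$ new matched vertices on the person side, and I need to show the set $S$ changes by at most $\ell$ elements. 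The clean way is to observe that the greedy procedure is equivalent to processing person–slot pairs $(i,j)$ in the order $(i,1)$ for all $i$, then $(i,2)$ for all $i$, etc., and greedily matching each to the lex-first unused element of $u_i$; removing one person $i_0$ deletes the $\le\ell$ pairs $(i_0,\cdot)$, and a swapping/alternating-path argument on the greedy trace shows the final matched element set changes by at most the number of deleted pairs, hence by at most $\ell$. This is the step I expect to be the main obstacle, since greedy matchings are notoriously sensitive to insertion order and one must pin down exactly why the person-level deletion only perturbs $S$ by $\ell$; I would likely prove a lemma "the symmetric difference of the greedy-matched element sets on $D$ and $D'$ is at most $\ell$ when $D,D'$ differ in one person of size $\le\ell$" and cite it as the analogue of \Cref{lem:dcdl:sensitivity}.

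Second, \textbf{lower bound \eqref{eq:thm:greedy1}.} Since a matching in the $\ell$-blown-up bipartite graph always matches each element vertex at most once and each person at most $\ell$ times, we have $\widehat m_\ell(D) \le \bdistinctelements{D}{\ell} \le \distinctelements{D}$ deterministically for every $\ell$, so $q_\ell(D) \le \distinctelements{D} - \frac{2\ell}{\varepsilon}\log\frac{1}{2\beta} \le \distinctelements{D}$. Conditioned on $\hat\ell$, $\hat\nu = q_{\hat\ell}(D) + \laplace{2\hat\ell/\varepsilon}$, and by the Laplace CDF bound quoted in the introduction, $\pr{}{\laplace{2\hat\ell/\varepsilon} \ge \frac{2\hat\ell}{\varepsilon}\log\frac{1}{2\beta}} = \beta$; hence $\pr{}{\hat\nu \le \widehat m_{\hat\ell}(D)} \ge 1-\beta$, and since $\widehat m_{\hat\ell}(D) \le \distinctelements{D}$ this gives \eqref{eq:thm:greedy1}. (Note the noise scale here is $2\hat\ell/\varepsilon$ and the penalty in $q_\ell$ is $\frac{2\ell}{\varepsilon}\log\frac{1}{2\beta}$, so the cancellation is exact — this is why the algorithm uses the factor-$2$ penalty.)

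Third, \textbf{upper bound \eqref{eq:greedy:upper}.} Two ingredients. (i) \emph{Greedy is a $2$-approximation}: a maximal matching has size at least half a maximum matching, so $\widehat m_\ell(D) \ge \frac12 \bdistinctelements{D}{\ell}$; and when $\max_i|u_i|\le\ell_*$, taking $\ell=\ell_*$ gives $\bdistinctelements{D}{\ell_*} = \distinctelements{D}$, so $\widehat m_{\ell_*}(D) \ge \frac12\distinctelements{D}$, whence $q_{\ell_*}(D) \ge \frac12\distinctelements{D} - \frac{2\ell_*}{\varepsilon}\log\frac{1}{2\beta}$. (ii) \emph{GEM accuracy}: invoke the utility guarantee of GEM (\Cref{algorithm:generalized-exponential-mechanism}) — with score family $\{q_\ell\}$ of Lipschitz constants $\{\ell\}$, domain size $\ellmax$, and privacy $\varepsilon/2$, GEM returns $\hat\ell$ with $q_{\hat\ell}(D) \ge \max_{\ell} q_\ell(D) - O\!\left(\frac{\hat\ell + \ell_*}{\varepsilon}\log\frac{\ellmax}{\beta}\right)$ with probability $\ge 1-\beta$ (using that the maximizer is witnessed at $\ell=\ell_*$, so the relevant Lipschitz constant is $\ell_*$, matching the bookkeeping that produced the $18\ell_A^*$ term in \Cref{thm:main-matching}). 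Then add a high-probability lower tail bound on the final Laplace noise, $\laplace{2\hat\ell/\varepsilon} \ge -\frac{2\hat\ell}{\varepsilon}\log\frac{1}{2\beta}$ with probability $\ge 1-\beta$. Chaining: $\hat\nu = q_{\hat\ell}(D) + \laplace{2\hat\ell/\varepsilon} \ge q_{\ell_*}(D) - O(\frac{\hat\ell+\ell_*}{\varepsilon}\log\frac{\ellmax}{\beta}) - \frac{2\hat\ell}{\varepsilon}\log\frac{1}{2\beta} \ge \frac12\distinctelements{D} - \frac{28\ell_*}{\varepsilon}\log\frac{\ellmax}{\beta}$, where in the last step I absorb all $\log$ factors into $\log\frac{\ellmax}{\beta}$ and bound $\hat\ell \le \ellmax$ only where it does not hurt (more carefully, the $\hat\ell$ terms cancel against GEM's own $\hat\ell$-dependence as in \Cref{thm:main-matching}, leaving a clean $\ell_*$ coefficient); a union bound over the two $\beta$-events gives probability $\ge 1-2\beta$, matching \eqref{eq:greedy:upper}. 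Here I should double-check the constant $28$ by tracking GEM's constants precisely, but that is routine arithmetic once the $O(\cdot)$ version is in hand.

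Finally, \textbf{running time.} The element universe is represented as a hash set $S$; across \emph{all} iterations of $\ell$ and $i$, each element of $\bigcup_i u_i$ is scanned $O(1)$ amortized times in total because once $v$ is added to $S$ it is never chosen again and the "lex-first $v \in u_i\setminus S$" scan over person $i$'s list advances a pointer that never resets — so the nested loops cost $O(|D|)$ overall (after an $O(|D|)$ sort of each $u_i$, or assuming the lists arrive sorted). GEM on $\ellmax$ candidates with scalar scores costs $O(\ellmax\log\ellmax)$, and the single Laplace draw is $O(1)$. Total: $O(|D| + \ellmax\log\ellmax)$, as claimed.
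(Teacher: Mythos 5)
Your overall structure — privacy via sensitivity of the greedy count plus composition, lower bound via the Laplace CDF and $\approxbdistinctelements{D}{\ell} \le \distinctelements{D}$, upper bound via the maximal-matching 2-approximation plus GEM utility plus the cancellation of the $\hat\ell$ terms, and the amortized runtime analysis — is exactly the route the paper takes, and those parts are correct. But the one genuinely new technical ingredient relative to \Cref{thm:main-matching} is the claim that the \emph{greedy} quantity $\approxbdistinctelements{\cdot}{\ell}$ has sensitivity $\ell$, and there you only name the lemma and gesture at "a swapping/alternating-path argument on the greedy trace" while explicitly flagging it as the main obstacle. That is a real gap: the whole point of the remark in the introduction ("$\bdistinctelements{\database}{\ell}$ having low sensitivity does not automatically ensure that the approximation to it has low sensitivity") is that greedy procedures can in principle cascade, so this step cannot be waved through.

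The paper closes it in two moves you should be able to reproduce. First, since the outer loop over $\ell$ and inner loop over $i$ visits the person--slot pairs in the order $(1,1),\dots,(n,1),(1,2),\dots$, running the greedy with parameter $\ell$ on $D$ equals running it with parameter $1$ on $\ell$ concatenated copies of $D$; removing one person removes $\ell$ entries from the concatenated input, so by the triangle inequality it suffices to prove sensitivity $1$ for the single-pass greedy under removal of one entry. Second, for the single pass one runs an induction on $i$ over the states $S_i$ (on $D$) and $S'_i$ (on $D'$ with person $j$ deleted): the invariant is that $S'_i \subseteq S_i \subseteq S'_i \cup \{v_i^*\}$ for a \emph{single} carried-over element $v_i^*$ (possibly absent). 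The induction step is that $u_i \setminus S_{i-1} \subseteq u_i \setminus S'_{i-1} \subseteq (u_i \setminus S_{i-1}) \cup \{v_{i-1}^*\}$, so the lexicographically first choices $v_i$ and $v'_i$ either coincide or $v'_i = v_{i-1}^*$, and in both cases the invariant is preserved with a new single witness. This shows the perturbation never grows beyond one element per deleted entry, which is precisely the statement your sketch needs. Everything else in your proposal (including the bookkeeping $\ell_A^* \le \ell_*$ via monotonicity of $\approxbdistinctelements{D}{\ell}$ in $\ell$, and the constant $4 + 24 = 28$) matches the paper.
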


The factor $\frac12$ in the upper bound guarantee \eqref{eq:greedy:upper} is the main loss compared to \Cref{thm:main-matching}. (The win is $O(|D|)$ runtime.)
This is a worst-case bound and our experimental result show that for realistic data the performance gap is not so bad.

The proofs of \Cref{thm:main-matching,thm:main-greedy} are in \Cref{sec:app:proofs}.

\begin{figure}
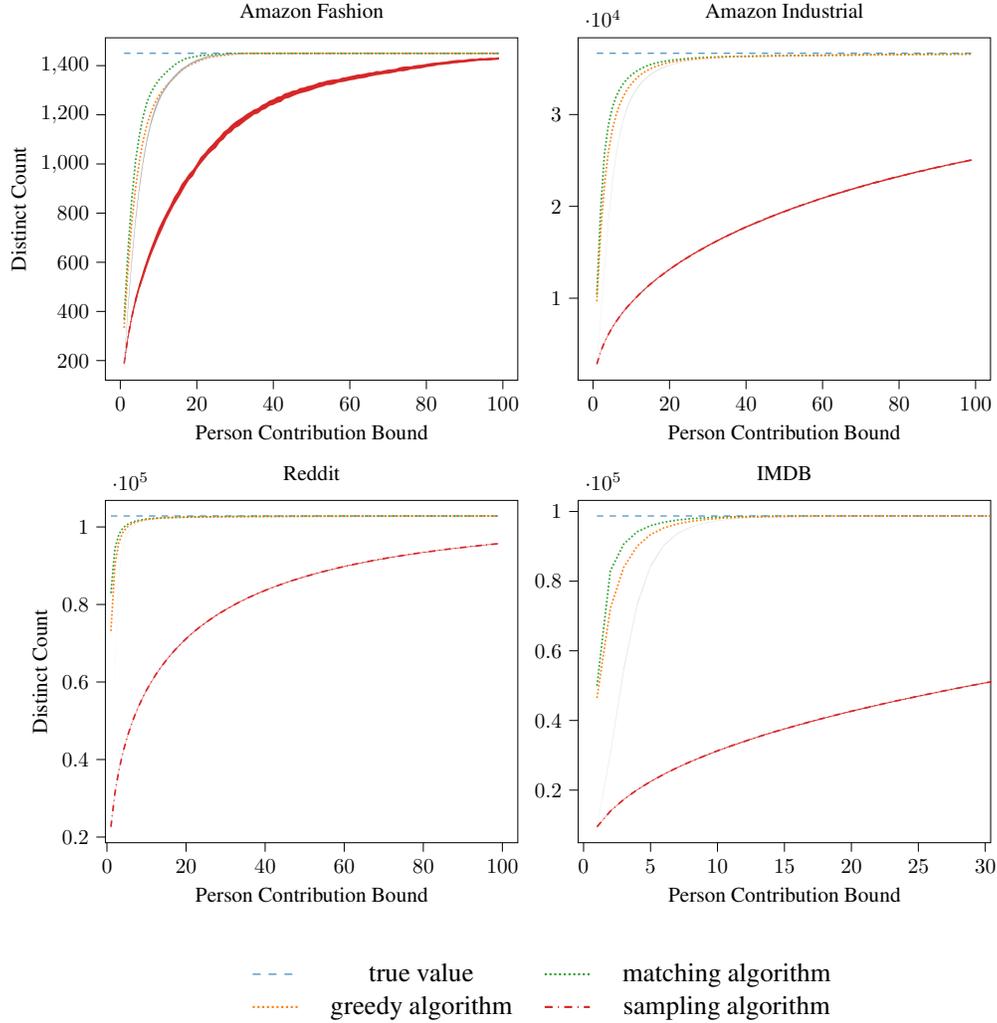

    \centering
    \begin{tikzpicture}[scale=0.8]

\definecolor{crimson2143940}{RGB}{214,39,40}
\definecolor{darkgray176}{RGB}{176,176,176}
\definecolor{darkorange25512714}{RGB}{255,127,14}
\definecolor{forestgreen4416044}{RGB}{44,160,44}
\definecolor{steelblue31119180}{RGB}{31,119,180}

\begin{groupplot}[
group style={group size=2 by 2, vertical sep=2cm},
tick align=outside,
tick pos=left,
x grid style={darkgray176},
xlabel={Person Contribution Bound},
xtick style={color=black},
y grid style={darkgray176},
ytick style={color=black},
]
\input{figures/bounds/amazon_fashion}
\input{figures/bounds/amazon_industrial}
\input{figures/bounds/askreddit}
\input{figures/bounds/imdb}

\end{groupplot}

\node at ($(group c2r1) + (-4,-13.0cm)$) {\pgfplotslegendfromname{grouplegend}}; 

\end{tikzpicture}
    \caption{Performance of different algorithms estimating distinct count assuming that each  person can contribute at most $\ell$ elements
    (e.g., these algorithms are estimating $\bdistinctelements{D}{\ell}$). (These algorithms have bounded sensitivity, but we do not add noise for privacy yet.)}
    \label{figure:bound}
\end{figure}

\begin{figure}
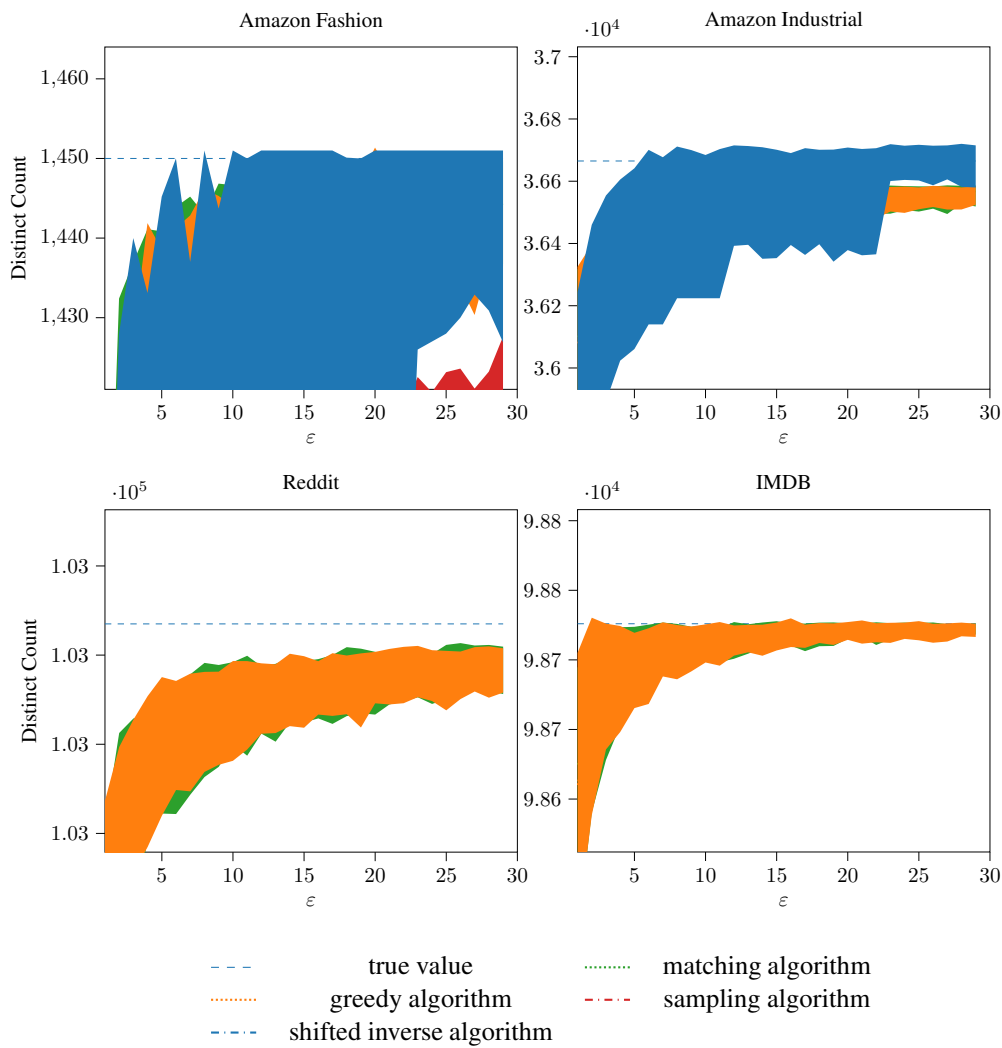

    \centering
    \begin{tikzpicture}[scale=0.8]

\definecolor{crimson2143940}{RGB}{214,39,40}
\definecolor{darkgray176}{RGB}{176,176,176}
\definecolor{darkorange25512714}{RGB}{255,127,14}
\definecolor{forestgreen4416044}{RGB}{44,160,44}
\definecolor{steelblue31119180}{RGB}{31,119,180}

\begin{groupplot}[
group style={group size=2 by 2, vertical sep=2cm},
tick align=outside,
tick pos=left,
x grid style={darkgray176},
xlabel={$\varepsilon$},
xtick style={color=black},
y grid style={darkgray176},
ytick style={color=black},
]
\input{figures/epsilon/amazon_fashion}
\input{figures/epsilon/amazon_industrial}
\input{figures/epsilon/askreddit}
\input{figures/epsilon/imdb}

\end{groupplot}

\node at ($(group c2r1) + (-4,-13.0cm)$) {\pgfplotslegendfromname{epsilongrouplegend}}; 

\end{tikzpicture}
    \caption{Performance of different algorithms estimating distinct count in a differentially private way for different values of $\varepsilon$; 
    for all of them $\beta = 0.05$ and $\ellmax = 100$. The values between 10th and 90th percentile of each algorithms estimation are shaded into
    corresponding colors.
    For the shifted inverse algorithm, the first two plots contain the results for $\beta = 0.05$ and $D$ equal to the true number of distinct 
    elements in the dataset. The later two datasets are lacking the results for shifted inverse algorithm due to the computational constraints.
    }
    \label{figure:epsilon}
\end{figure}

\begin{table}
    \begin{center}
        \small
        \begin{tabular}{l c c c c c r} 
            \toprule
            \multirow{2}{*}{Data Set} & \multirow{2}{*}{Vocabulary Size} & \multicolumn{3}{c}{Estimated Vocabulary Size} \\
            \cmidrule(lr){3-5}
            & & 10th Percentile & Median & 90th Percentile \\
            \midrule
            Amazon Fashion & 1450 & 1220.6 & 1319.1 & 1394.2 \\
            Amazon Industrial and Scientific & 36665 & 35970.5 & 36198.9 & 36326.7  \\
            Reddit & 102835 & 102379.7 & 102512.6 & 102643.9  \\
            IMDB & 98726 & 98555.6 & 98670.4 & 98726.8  \\
            \bottomrule
        \end{tabular}
        \caption{True and estimated (using $\mathrm{DPDistinctCount}$ with $\varepsilon = 1$, 
            $\beta = 0.05$ and $\ellmax = 100$) counts per data set.}
    \end{center}
\end{table}
\section{Related Work}
Counting the number of distinct elements in a collection is one of the most fundamental database computations. This is supported as the \texttt{COUNT(DISTINCT ...)} operation in SQL. 
Hence, unsurprisingly, the problem of computing the number of unique elements in a differentially private way has been extensively investigated. 

In the case where we assume each person contributes only one element (a.k.a.~event-level privacy or item-level privacy), the number of distinct elements has sensitivity $1$ and, hence, we can simply use Laplace (or Gaussian) noise addition to release. However, it may not be possible to compute the number of distinct elements exactly due to space, communication, or trust constraints (e.g.~in the local model of DP \citep{kasiviswanathan2011can}).

Most efforts have been focused on creating differentially private algorithms for counting distinct elements under space constraints (and assuming each person contributes a single element). To save space, we wish to compute a small summary of the dataset (called a sketch) that allows us to estimate the number of distinct elements and which can be updated as more elements are added.
\citet{SST20} proved that a variant of the Flajolet-Martin sketch is private and \citet{pagh2020efficient} analyzed a sketch over the binary finite field.
\citet{DTT22} proved a general privacy result for order-invariant cardinality estimators. 
\citet{hehir2023sketch} provided a mergeable private sketch (i.e.~two sketches can be combined to obtain a sketch of the union of the two datasets).
In contrast, \citet{DLB19} proved an impossibility result for mergeable sketches, which shows that privacy or accuracy must degrade as we merge sketches.

Counting unique elements has been considered in the pan-private streaming setting \citep{DNPRY10} (the aforementioned algorithms also work in the pan-private setting) and in the continual release streaming setting \citep{GKNM23}. (In the continual release setting the approximate count is continually updated, while in the pan-private setting the approximate count is only revealed once, but at an unknown point in time.)
\citet{kreuter2020privacy} give private algorithms for counting distinct elements in the setting of secure multiparty computation.
In the local and shuffle models, the only known results are communication complexity bounds \citep{CGKM21}.

A closely related problem is that of identifying as many elements as possible (rather than just counting them); this is known as ``partition selection,'' ``set union,'' or ``key selection'' \citep{SDH23,DVGM22,KKMN09,CKG22,CardosoR22,gopi2020differentially,zhang2023differentially}.  
Note that, by design, DP prevents us from identifying elements that only appear once in the dataset, or only a few times. Thus we can only output items that appear frequently.

The most closely related work to ours is that of \citet{DFYTM22,FDY22}. These papers present two different algorithms for privately approximating the distinct count (and other statistics). We discuss these below and present an experimental comparison in \Cref{table:shifted-inverse}. We also remark that both papers prove instance optimality guarantees for their algorithms. 

Most similar to our algorithm is the Race-to-the-Top ({R2T}) algorithm \citep{DFYTM22}; {R2T} is a generic framework and the original paper did not specifically consider counting distinct elements, but the approach can easily be applied to $\bdistinctelements{\database}{\ell}$. 
While we use the generalized exponential mechanism \cite{RS15:lipschitz-extensions} to select the sensitivity $\ell$, {R2T} computes multiple lower bounds with different sensitivities $\ell$ and then outputs the maximum of the noisy values. This approach incurs the cost of composition across the multiple evaluations. To manage this cost, {R2T} only evaluates $\ell = 2, 4, 8, \cdots, 2^{\log \ellmax }$. Compared to our guarantee \eqref{eq:combined-thm-matching} with an error $O\left(\frac{\ell_*}{\varepsilon} \log\left(\frac{\ellmax}{\beta}\right)\right)$, {R2T} has a slightly worse theoretical error guarantee of $O\left(\frac{\ell_*}{\varepsilon} \log(\ellmax) \log\left(\frac{\log \ellmax}{\beta}\right)\right)$ \cite[Theorem 5.1]{DFYTM22}.

The shifted inverse mechanism \cite{FDY22} takes a different approach to the problem. Rather than relying on adding Laplace noise (as we do), it applies the exponential mechanism with an ingenious loss function (see \cite{DPorg-down-sensitivity} for additional discussion). When applied to counting distinct elements, the shifted inverse mechanism gives an accuracy guarantee comparable to ours \eqref{eq:combined-thm-matching}.
The downside of the shifted inverse mechanism is that computing the loss function is, in general, NP-hard. \citet{FDY22} propose polynomial-time variants for several specific tasks, including counting distinct elements. However, the algorithm is still relatively slow.

\begin{table}
    \begin{center}
        \small
        \begin{tabular}{l c c c c c r} 
            \toprule
            User & \multicolumn{2}{c}{Supplier} & \multicolumn{2}{c}{Customer} \\
            \cmidrule(lr){2-3}\cmidrule(lr){4-5}
            Attribute & PS.AQ & L.EP & O.OD & L.RD \\
            \midrule
            R2T \cite{DFYTM22}                        & 0.0658 & 0.1759 & 0.0061 & 0.150 \\ 
            (Approx)ShiftedInverse \cite{FDY22}       & 0.0553 & 0.0584 & 0.005  & 0.0061 \\
            $\mathrm{DPApproxDistinctCount}$          & 0.0140 & 0.0110 & 0.0008 & 0.0037 \\
            $\mathrm{DPDistinctCount}$                & 0.0100 & 0.0096 & 0.0008 & 0.0001 \\
            \bottomrule
        \end{tabular}
        \caption{\label{table:shifted-inverse}
            Average relative absolute error of algorithms described in this paper and in \cite{DFYTM22,FDY22} on the TPC-H dataset. 
            For each algorithm we executed it 100 times, removed 20 top and 20 bottom values and computed average error for the rest of 60 values.
        }
    \end{center}
\end{table}
\section{Technical Background on Differential Privacy}

For detailed background on differential privacy, see the survey by \citet{vadhan2017complexity} or the book by \citet{dwork2014algorithmic}. We briefly define pure DP and some basic mechanisms and results.

\begin{algorithm}
    \caption{Generalized Exponential Mechanism \citep{RS15:lipschitz-extensions}}\label{algorithm:generalized-exponential-mechanism}
    \begin{algorithmic}[1]
    \Procedure{GEM}{$D \!\in\! \mathcal{X}^*$; $q_i \!:\! \mathcal{X}^* \!\to\! \R$ for $i \!\in\! [m]$, $\Delta_i\!>\!0$ for $i \!\in\! [m]$, $\varepsilon\!>\!0$, $\beta\!>\!0$}
        \State{\textbf{Require}: $q_i$ has sensitivity $\sup_{\genfrac{}{}{0pt}{2}{x,x'\in\mathcal{X}^*}{\text{neighboring}}} |q(x)-q(x')| \le \Delta_i$ for all $i \in [m]$.}
        \State{Let $t = \frac{2 }{\varepsilon}\log\left(\frac{m}{\beta}\right)$.}
        \For{$i \in [m]$}
            \State{$s_i \gets \min_{j \in [m]} \frac{(q_i(D) - t \Delta_i) - (q_j(D) - t \Delta_j)}{\Delta_i + \Delta_j}$.}
        \EndFor
        \State{%
            Sample $\hat{i}\in [m]$ from the Exponential Mechanism using the normalized scores $s_i$; i.e., 
            \vspace{-5pt}\[
                \forall i \in [m] \qquad \pr{}{\hat{i}=i} = \frac{\exp\left(\frac12\varepsilon s_i\right)}{\sum_{k \in [m]} \exp\left(\frac12\varepsilon s_k\right)}.
            \vspace{-10pt}\]
        }
        \State{\Return{$\hat{i} \in [m]$}.}
        \EndProcedure
    \end{algorithmic}
\end{algorithm}

\begin{definition}[Differential Privacy (DP) \citep{dwork2006calibrating} ]
    A randomized algorithm $M : \mathcal{X}^* \to \mathcal{Y}$ satisfies $\varepsilon$-DP if, for all inputs $D,D'\in\mathcal{X}^*$ differing only by the addition or removal of an element and for all measurable $S \subset \mathcal{Y}$, we have $\pr{}{M(D) \in S} \le e^\varepsilon \cdot \pr{}{M(D')\in S}$.
\end{definition}
We refer to pairs of inputs that differ only by the addition or removal of one person's data as \emph{neighboring}.
Note that it is common to also consider replacement of one person's data; for simplicity, we do not do this.
We remark that there are also variants of DP such as approximate DP \citep{dwork2006our} and concentrated DP \citep{dwork2016concentrated,bun2016concentrated}, which quantitatively relax the definition, but these are not relevant in our application. A key property of DP is that it composes and is invariant under postprocessing.

\begin{lemma}[Composition \& Postprocessing]
    Let $M_1 : \mathcal{X}^* \to \mathcal{Y}$ be $\varepsilon_1$-DP.
    Let $M_2 : \mathcal{X}^* \times \mathcal{Y} \to \mathcal{Z}$ be such that, for all $y \in \mathcal{Y}$, the restriction $M(\cdot,y) : \mathcal{X}^* \to \mathcal{Z}$ is $\varepsilon_2$-DP.
    Define $M_{12} : \mathcal{X}^* \to \mathcal{Z}$ by $M_{12}(D) = M_2(D,M_1(D))$. 
    Then $M_{12}$ is $(\varepsilon_1+\varepsilon_2)$-DP.
\end{lemma}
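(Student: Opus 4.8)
The plan is to run the standard argument for adaptive composition of pure DP: condition on the output of $M_1$, apply the guarantee of $M_2$ pointwise in that output, and then absorb the resulting bounded function of $M_1$'s output back into the guarantee of $M_1$ (which is just the postprocessing property in disguise).

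First I would fix neighboring inputs $D,D'\in\mathcal{X}^*$ and a measurable set $S\subseteq\mathcal{Z}$, and write $\mu_D,\mu_{D'}$ for the laws of $M_1(D),M_1(D')$ on $\mathcal{Y}$. Since the internal randomness of $M_1$ and $M_2$ is independent (the input $D$ is fixed, not random), one disintegrates
\[
    \pr{}{M_{12}(D)\in S}=\int_{\mathcal{Y}}\pr{}{M_2(D,y)\in S}\,\mathrm{d}\mu_D(y).
\]
Then I would apply the $\varepsilon_2$-DP guarantee of $M_2(\cdot,y)$, namely $\pr{}{M_2(D,y)\in S}\le e^{\varepsilon_2}\pr{}{M_2(D',y)\in S}$ for every fixed $y$, and integrate this pointwise inequality against $\mu_D$ to get
\[
    \pr{}{M_{12}(D)\in S}\le e^{\varepsilon_2}\int_{\mathcal{Y}}f(y)\,\mathrm{d}\mu_D(y),\qquad f(y):=\pr{}{M_2(D',y)\in S}\in[0,1].
\]

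The remaining step is to show $\int f\,\mathrm{d}\mu_D\le e^{\varepsilon_1}\int f\,\mathrm{d}\mu_{D'}$, i.e.\ that $\varepsilon_1$-DP of $M_1$ implies $\ex{}{f(M_1(D))}\le e^{\varepsilon_1}\ex{}{f(M_1(D'))}$ for every measurable $f:\mathcal{Y}\to[0,1]$. I would prove this via the layer-cake formula $\int_{\mathcal{Y}}f\,\mathrm{d}\mu_D=\int_0^1\mu_D(\{y:f(y)>t\})\,\mathrm{d}t$, applying the $\varepsilon_1$-DP guarantee to each superlevel set $\{y:f(y)>t\}$ and integrating back over $t$. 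Chaining this with the previous display gives $\pr{}{M_{12}(D)\in S}\le e^{\varepsilon_1+\varepsilon_2}\pr{}{M_{12}(D')\in S}$, which is exactly $(\varepsilon_1+\varepsilon_2)$-DP.

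I expect the only delicate point to be the measure-theoretic setup rather than the inequalities: one should work in a model where $M_1$ and $M_2$ are given by probability kernels, so that the map $y\mapsto(\text{law of }M_2(D,y))$ is measurable, which is what justifies both the disintegration in the first display and the measurability of $f$ used in the layer-cake step; Fubini's theorem then licenses the interchanges of integration. For the discrete or standard-Borel output spaces relevant to this paper this is entirely routine, so no genuine obstacle arises — the bulk of the work is just being careful about what "the law of $M_2(D,\cdot)$" means.
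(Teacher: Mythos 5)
Your argument is correct: the disintegration over $M_1$'s output, the pointwise application of $M_2$'s guarantee, and the layer-cake step showing that $\varepsilon_1$-DP implies $\ex{}{f(M_1(D))}\le e^{\varepsilon_1}\ex{}{f(M_1(D'))}$ for bounded nonnegative $f$ together give exactly the claimed $(\varepsilon_1+\varepsilon_2)$ bound. The paper states this lemma as standard background without proof, and what you have written is the standard argument one would supply, so there is nothing to reconcile.
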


A basic DP tool is the Laplace mechanism \citep{dwork2006calibrating}. Note that we could also use the \emph{discrete} Laplace mechanism \citep{ghosh2009universally,canonne2020discrete}.

\begin{lemma}[Laplace Mechanism]
    Let $q : \mathcal{X}^* \to \R$. We say $q$ has sensitivity $\Delta$ if $|q(D) - q(D')| \le \Delta$ for all neighboring $D,D'\in\mathcal{X}^*$.
    Define $M : \mathcal{X}^* \to \R$ by $M(D) = q(D) + \laplace{\Delta/\varepsilon}$, where $\laplace{b}$ denotes laplace noise with mean $0$ and variance $2b^2$ -- i.e., $\pr{\xi \gets \laplace{b}}{\xi>t}=\pr{\xi \gets \laplace{b}}{\xi<-t}=\frac12\exp\left(\frac{t}{b}\right)$ for all $t>0$.
    Then $M$ is $\varepsilon$-DP.
\end{lemma}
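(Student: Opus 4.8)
The plan is the standard pointwise-density comparison. First I would recall that $\laplace{b}$ has probability density function $f_b(x) = \frac{1}{2b}\exp(-|x|/b)$ on $\R$, so that for $b = \Delta/\varepsilon$ the output $M(D) = q(D) + \laplace{b}$ has density $y \mapsto f_b(y - q(D)) = \frac{1}{2b}\exp(-|y - q(D)|/b)$, and likewise $M(D')$ has density $y \mapsto \frac{1}{2b}\exp(-|y-q(D')|/b)$. (If $\Delta = 0$ the claim is trivial since $q$ is constant on neighbors, so I may assume $\Delta > 0$, hence $b > 0$.)

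Next, fix neighboring inputs $D, D'$ and a point $y \in \R$, and bound the ratio of the two densities:
\begin{align*}
\frac{f_b(y - q(D))}{f_b(y - q(D'))}
&= \exp\!\left( \frac{|y - q(D')| - |y - q(D)|}{b} \right) \\
&\le \exp\!\left( \frac{|q(D) - q(D')|}{b} \right)
\le \exp\!\left( \frac{\Delta}{b} \right) = e^\varepsilon ,
\end{align*}
where the first inequality uses the triangle inequality $|y - q(D')| \le |y - q(D)| + |q(D) - q(D')|$ and the second uses the sensitivity bound $|q(D) - q(D')| \le \Delta$. Thus the density of $M(D)$ is everywhere at most $e^\varepsilon$ times that of $M(D')$.

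Finally, for any measurable $S \subseteq \R$ I would integrate this pointwise inequality:
\[
\pr{}{M(D) \in S} = \int_S f_b(y - q(D))\, dy \le e^\varepsilon \int_S f_b(y - q(D'))\, dy = e^\varepsilon \cdot \pr{}{M(D') \in S},
\]
which is exactly the definition of $\varepsilon$-DP. There is no substantive obstacle here; the only points needing a little care are applying the triangle inequality in the direction that keeps the exponent bounded by $\Delta/b$ uniformly in $y$, and observing that the argument is symmetric under swapping $D$ and $D'$, so the inequality also holds with their roles exchanged — which covers the ``addition or removal'' neighbor relation in both directions.
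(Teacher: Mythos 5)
Your proof is correct: the paper states this lemma as standard background (citing Dwork et al.) without giving a proof, and your density-ratio argument with the triangle inequality is exactly the canonical derivation one would supply. The only cosmetic remark is that the paper's stated tail formula is missing a minus sign in the exponent ($\frac12\exp(-t/b)$); your density $f_b(x)=\frac{1}{2b}\exp(-|x|/b)$ is the correct one.
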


Another fundamental tool for DP is the exponential mechanism \citep{mcsherry2007mechanism}. It selects the approximately best option from among a set of options, where each option $i$ has a quality function $q_i$ with sensitivity $\Delta$.
The following result generalizes the exponential mechanism by allowing each of the quality functions to have a different sensitivity.

\begin{theorem}[Generalized Exponential Mechanism {\cite[Theorem 1.4]{RS15:lipschitz-extensions}}]\label{thm:gem}
    For each $i \in [m]$, let $q_i : \mathcal{X}^* \to \R$ be a query with sensitivity $\Delta_i$.
    Let $\varepsilon,\beta>0$.
    The generalized exponential mechanism ($\Call{GEM}{\cdot;\{q_i\}_{i\in[m]},\{\Delta_i\}_{i\in[m]},\varepsilon,\beta}$ in \Cref{algorithm:generalized-exponential-mechanism}) is $\varepsilon$-DP and has the following utility guarantee.
    For all $D \in \mathcal{X}^*$, we have \[\pr{\hat{i} \gets \Call{GEM}{D;\{q_i\}_{i\in[m]},\{\Delta_i\}_{i\in[m]},\varepsilon,\beta}}{q_{\hat{i}}(D) \ge \max_{j \in [m]} q_j(D) - \Delta_j \cdot \frac{4}{\varepsilon} \log\left(\frac{m}{\beta}\right)} \ge 1-\beta.\]
\end{theorem}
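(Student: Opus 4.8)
The plan is to handle privacy and utility separately, reducing each to the textbook analysis of the plain exponential mechanism applied to the normalized scores $s_i$ computed in \Cref{algorithm:generalized-exponential-mechanism}. The point of the construction is that, although the quality functions $q_i$ have widely varying sensitivities $\Delta_i$, the scores $s_i(D) = \min_{j\in[m]} \frac{(q_i(D)-t\Delta_i)-(q_j(D)-t\Delta_j)}{\Delta_i+\Delta_j}$ all have sensitivity at most $1$. Granting that, privacy is immediate (it is the exponential mechanism with a sensitivity-$1$ quality function at parameter $\varepsilon$), and utility follows from the exponential mechanism's accuracy bound plus a short unwinding of the definition of $s_i$.

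For privacy, abbreviate $g_i(D) := q_i(D) - t\Delta_i$, so $|g_i(D) - g_i(D')| \le \Delta_i$ for neighboring $D,D'$. I would first show $s_i(D') \ge s_i(D) - 1$: for each $j$, $g_i(D') - g_j(D') \ge \big(g_i(D)-g_j(D)\big) - (\Delta_i+\Delta_j)$, and dividing by $\Delta_i+\Delta_j > 0$ and taking the minimum over $j$ gives it; by symmetry $|s_i(D)-s_i(D')| \le 1$. Since the mechanism samples $\hat i$ with probability proportional to $\exp(\tfrac12\varepsilon s_i(D))$, this is exactly the $\varepsilon$-DP exponential mechanism for a quality function of sensitivity $1$, so $\varepsilon$-DP follows from the usual argument (the ratio of numerators is at most $e^{\varepsilon/2}$ and the ratio of the normalizing constants is at most $e^{\varepsilon/2}$).

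For utility, observe first that $s_i(D) \le 0$ for every $i$ (the $j=i$ term of the minimum is $0$) and that $s_{j^\star}(D) = 0$ for any $j^\star \in \argmax_j g_j(D)$ (each numerator $g_{j^\star}(D) - g_j(D)$ is nonnegative), hence $\max_i s_i(D) = 0$. The standard accuracy guarantee for the $\varepsilon$-DP exponential mechanism over $m$ outcomes with a sensitivity-$1$ quality function then gives that, with probability at least $1-\beta$, $s_{\hat i}(D) \ge \max_i s_i(D) - \tfrac{2}{\varepsilon}\log(m/\beta) = -t$. Now let $k^\star \in \argmax_j\big(q_j(D) - \tfrac{4}{\varepsilon}\log(m/\beta)\,\Delta_j\big)$, i.e.\ $k^\star$ maximizes $q_j(D) - 2t\Delta_j$. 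Since $s_{\hat i}(D) \ge -t$ forces every term of the defining minimum to be at least $-t$, in particular $\frac{g_{\hat i}(D) - g_{k^\star}(D)}{\Delta_{\hat i}+\Delta_{k^\star}} \ge -t$; substituting $g_i(D) = q_i(D)-t\Delta_i$, the $t\Delta_{\hat i}$ terms cancel and one gets $q_{\hat i}(D) \ge q_{k^\star}(D) - 2t\Delta_{k^\star} = \max_j\big(q_j(D) - \tfrac{4}{\varepsilon}\log(m/\beta)\,\Delta_j\big)$, as required.

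The only genuinely non-obvious step --- and the one I would call \textbf{the main obstacle} --- is the sensitivity bound $|s_i(D)-s_i(D')| \le 1$; the rest is bookkeeping on top of the exponential mechanism. The offset $t$ plays no role in privacy and exists purely for utility: it is absorbed into $g_i$ so that the exponential mechanism's additive slack of $t$ (in $s$-units) becomes an additive slack of $2t = \tfrac{4}{\varepsilon}\log(m/\beta)$ (scaled by $\Delta_j$) in $q$-units. This is precisely the argument of \citet{RS15:lipschitz-extensions}, which I would follow, being careful only that the minimum defining $s_i$ ranges over $j=i$ as well (so $s_i \le 0$) and that the denominators $\Delta_i+\Delta_j$ are strictly positive.
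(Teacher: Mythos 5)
Your proof is correct: the sensitivity-$1$ bound for the normalized scores $s_i$, the observation that $\max_i s_i(D)=0$, and the unwinding of $s_{\hat i}(D)\ge -t$ into $q_{\hat i}(D)\ge \max_j\bigl(q_j(D)-2t\Delta_j\bigr)$ all check out, and the constants match the statement. The paper itself gives no proof of this theorem --- it imports it verbatim from \cite[Theorem 1.4]{RS15:lipschitz-extensions} --- and your argument is a faithful reconstruction of that cited proof, so there is nothing to compare beyond noting that you have supplied the details the paper omits.
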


\section{Experimental Results}

We empirically validate the performance of our algorithms using data sets of various sizes from different text domains. 
We focus on the problem of computing vocabulary size with person-level DP. 
\Cref{section:experiments-data-sets} describes the data sets and \Cref{section:experiments-comparison} discusses the algorithms we compare.

\subsection{Datasets}
\label{section:experiments-data-sets}

We used four publicly available datasets to assess the accuracy of our algorithms compared to baselines. 
Two small datasets were used: Amazon Fashion 5-core~\citep{NiLM19} (reviews of fashion products on Amazon) and Amazon Industrial and Scientific 5-core~\citep{NiLM19} (reviews of industrial and scientific products on Amazon). 
Two large data sets were also used: Reddit~\citep{reddit_dataset} (a data set of posts collected from r/AskReddit) and IMDb~\citep{imdb_dataset,maas-EtAl:2011:ACL-HLT2011} (a set of movie reviews scraped from IMDb). 
See details of the datasets in \Cref{table:data-sets-properties}.

\begin{table}
    \begin{center}
        \small
        \begin{tabular}{l c c c c c r} 
            \toprule
            \multirow{2}{*}{Data Set} & \multicolumn{2}{c}{Size} & \multicolumn{3}{c}{Words per Person}  & \multirow{2}{*}{Vocabulary Size} \\
            \cmidrule(lr){4-6}\cmidrule(lr){2-3}
             & People & Records & Min & Median & Max & \\
            \midrule
            Amazon Fashion & 404 & 8533 & 1 & 14.0 & 139 & 1450 \\
        	Amazon Industrial and Scientific & 11041 & 1446031 & 0 & 86 & 2059 & 36665 \\
        	Reddit & 223388 & 7117494 & 0 & 18.0 & 1724 & 102835 \\
        	IMDB & 50000 & 6688844 & 5 & 110.0 & 925 & 98726 \\
            \bottomrule
        \end{tabular}
        \caption{Data sets details.}
        \label{table:data-sets-properties}
    \end{center}
\end{table}

\subsection{Comparisons}
\label{section:experiments-comparison}
Computing the number of distinct elements using a differentially private mechanism involves two steps: selecting a contribution bound ($\ell$ in our algorithms) and counting the number of distinct elements in a way that restricts each person to only contribute the given number of elements.

\paragraph{Selection:} We examine four algorithms for determining the contribution limit: 
\begin{enumerate}
    \item Choosing the true maximum  person contribution (due to computational restrictions this was only computed for Amazon Fashion data set).
    \item Choosing the 90th percentile of  person contributions.
    \item Choosing the  person contribution that exactly maximizes the utility function 
        $q_\ell(D) = \bdistinctelements{D}{\ell} - \frac{\ell}{\varepsilon} \log(\frac{1}{2\beta})$, where $\varepsilon = 1$, and $\beta = 0.001$.
    \item Choosing the  person contribution that approximately maximizes the utility function using the generalized exponential mechanism with $\epsilon = 1$.
\end{enumerate}
Note that only the last option is differentially private, but we consider the other comparison points nonetheless.

\paragraph{Counting:} We also consider three algorithms for estimating the number of distinct elements for a given sensitivity bound $\ell$:
\begin{enumerate}
    \item For each person, we uniformly sample $\ell$ elements without replacement and count the number of distinct elements in the union of the samples.
    \item The linear-time greedy algorithm (\Cref{algorithm:greedy}) with $\varepsilon = 1$ and $\beta=0.001$.
    \item The matching-based algorithm (\Cref{algorithm:matching}) with $\varepsilon = 1$ and $\beta=0.001$.
\end{enumerate}
All of these can be converted into DP algorithms by adding Laplace noise to the result.

In all our datasets ``true maximum  person contribution'' and ``90th percentile of  person contributions'' output bounds that 
are much larger than necessary to obtain true distinct count; hence, we only consider DP versions of the estimation algorithm for these selection algorithms.

\subsection{Results}

\Cref{figure:bound} shows the dependency of the result on the contribution bound for each of the algorithms for computing the number of distinct elements with fixed  person contribution. 
It is clear that matching and greedy algorithms vastly outperform the sampling approach that is currently used in practice.

\Cref{table:amazon-fashion,table:amazon_industrial,table:askreddit,table:imdb} show the performance of algorithms for selecting optimal  person contribution bounds on different data sets.
For all bound selection algorithms and all data sets, the sampling approach to estimating the distinct count performs much worse than the greedy and matching-based approaches. 
The greedy approach performs worse than the matching-based approach, but the difference is about 10\% for Amazon Fashion and is almost negligible for other data sets since they are much larger.
As for the matching-based algorithm, it performs as follows on all the data sets:
\begin{enumerate}
    \item The algorithm that uses the bound equal to the maximal  person contribution overestimates the actual necessary bound. 
        Therefore, we only consider the DP algorithms for counts estimation. 
        It is easy to see that while the median of the estimation is close to the actual distinct count, the amount of noise is somewhat large.
    \item The algorithm that uses the bound equal to the 99th percentile of  person contributions also overestimates the necessary bound and behaves similarly to the one we just described (though the spread of the noise is a bit smaller).
    \item The algorithms that optimize the utility function are considered: one non-private and one private. 
        The non-private algorithm with non-private estimation gives the answer that is very close to the true number of distinct elements. 
        The private algorithm with non-private estimation gives the answer that is worse, but not too much. 
        Finally, the private algorithm with the private estimation gives answers very similar to the results of the non-private estimation.
\end{enumerate}

\newpage
\begin{ack}

We would like to thank Badih Ghazi, Andreas Terzis, and \href{https://openreview.net/forum?id=zdli6OxpWd}{\textcolor{black}{four anonymous reviewers for their constructive feedback and valuable suggestions}}.
We thank Markus Hasen\"ohrl for helpful discussions, which helped us identify the problem. 
In addition, we are grateful to Ke Yi, Wei Dong, and Juanru Fang for bringing their related work \cite{DFYTM22,FDY22} to our attention.
\end{ack}

\printbibliography 

\newpage
\appendix

\begin{table}
    \begin{center}
        \small
        \begin{tabular}{l c c c c c c r} 
    \toprule
    \multirow{2}{*}{Selection} & \multirow{2}{*}{Counting} & \multicolumn{3}{c}{Person Contribution Bound} & \multicolumn{3}{c}{Distinct Count} \\
    \cmidrule(lr){3-5}\cmidrule(lr){6-8}
    &  & 10th PC & Median & 90th PC & 10th PC & Median & 90th PC \\
    \midrule
    Max Contrib & DP Sampling & -- & 139 & -- & 1196.8 & 1407.5 & 1649.1 \\
    Max Contrib & DP Greedy & -- & 139 & -- & 1174.2 & 1439.2 & 1646.5 \\
    Max Contrib & DP Matching & -- & 139 & -- & 1222.2 & 1460.9 & 1631.0 \\
    [5pt]
    90th PC Contrib & DP Sampling & -- & 48 & -- & 1225.4 & 1296.2 & 1377.9 \\
    90th PC Contrib & DP Greedy & -- & 48 & -- & 1367.0 & 1432.6 & 1516.3 \\
    90th PC Contrib & DP Matching & -- & 48 & -- & 1365.3 & 1444.7 & 1524.8 \\
    [5pt]
    Max Utility & Sampling & -- & 41 & -- & 1247.0 & 1259.0 & 1270.0 \\
    Max Utility & Greedy & -- & 20 & -- & -- & 1376 & -- \\
    Max Utility & Matching & -- & 17 & -- & -- & 1428 & -- \\
    [5pt]
    DP Max Utility & Sampling & 8.9 & 16.0 & 28.0 & 661.6 & 892.5 & 1124.5 \\
    DP Max Utility & Greedy & 8.0 & 11.0 & 17.0 & 1148.0 & 1241.0 & 1348.0 \\
    DP Max Utility & Matching & 7.0 & 9.0 & 14.0 & 1252.0 & 1317.0 & 1400.0 \\
    [5pt]
    DP Max Utility & DP Sampling & 9.0 & 16.0 & 27.1 & 702.4 & 899.1 & 1145.1 \\
    DP Max Utility & DP Greedy & 8.0 & 10.0 & 19.0 & 1128.5 & 1224.4 & 1370.8 \\
    DP Max Utility & DP Matching & 6.9 & 9.0 & 13.1 & 1220.6 & 1319.1 & 1394.2 \\
    \bottomrule
\end{tabular}

        \caption{Amazon Fashion: the comparison is for $\ellmax = 100$.}
        \label{table:amazon-fashion}
    \end{center}
\end{table}

\begin{table}
    \begin{center}
        \small
        \begin{tabular}{l c c c c c c r} 
    \toprule
    \multirow{2}{*}{Selection} & \multirow{2}{*}{Counting} & \multicolumn{3}{c}{Person Contribution Bound} & \multicolumn{3}{c}{Distinct Count} \\
    \cmidrule(lr){3-5}\cmidrule(lr){6-8}
    &  & 10th PC & Median & 90th PC & 10th PC & Median & 90th PC \\
    \midrule
    90th PC Contrib & DP Sampling & -- & 297 & -- & 32458.1 & 32943.8 & 33452.6 \\
    90th PC Contrib & DP Greedy & -- & 297 & -- & 36270.3 & 36669.5 & 37019.0 \\
    90th PC Contrib & DP Matching & -- & 297 & -- & 36236.2 & 36651.7 & 37102.7 \\
    [5pt]
    Max Utility & Sampling & -- & 99 & -- & 24967.0 & 25039.0 & 25121.2 \\
    Max Utility & Greedy & -- & 79 & -- & -- & 36246 & -- \\
    Max Utility & Matching & -- & 42 & -- & -- & 36364 & -- \\
    [5pt]
    DP Max Utility & Sampling & 85.9 & 96.0 & 99.0 & 23852.8 & 24739.0 & 25049.8 \\
    DP Max Utility & Greedy & 34.0 & 49.0 & 66.1 & 35393.0 & 35839.0 & 36116.9 \\
    DP Max Utility & Matching & 22.9 & 30.5 & 43.2 & 36026.8 & 36243.5 & 36371.2 \\
    [5pt]
    DP Max Utility & DP Sampling & 87.0 & 95.0 & 99.0 & 23997.6 & 24701.1 & 25067.7 \\
    DP Max Utility & DP Greedy & 32.9 & 47.5 & 68.0 & 35336.6 & 35776.2 & 36136.6 \\
    DP Max Utility & DP Matching & 22.0 & 28.0 & 38.0 & 35970.5 & 36198.9 & 36326.7 \\
    \bottomrule
\end{tabular}
        \caption{Amazon Industrial and Scientific: the comparison is for $\ellmax = 100$.}
        \label{table:amazon_industrial}
    \end{center}
\end{table}

\begin{table}
    \begin{center}
        \small
        \begin{tabular}{l c c c c c c r} 
    \toprule
    \multirow{2}{*}{Selection} & \multirow{2}{*}{Counting} & \multicolumn{3}{c}{Person Contribution Bound} & \multicolumn{3}{c}{Distinct Count} \\
    \cmidrule(lr){3-5}\cmidrule(lr){6-8}
    &  & 10th PC & Median & 90th PC & 10th PC & Median & 90th PC \\
    \midrule
    90th PC Contrib & DP Sampling & -- & 75 & -- & 92480.7 & 92654.8 & 92812.1 \\
    90th PC Contrib & DP Greedy & -- & 75 & -- & 102544.8 & 102665.7 & 102817.7 \\
    90th PC Contrib & DP Matching & -- & 75 & -- & 102651.1 & 102784.1 & 102907.8 \\
    [5pt]
    Max Utility & Sampling & -- & 99 & -- & 95606.9 & 95692.0 & 95750.3 \\
    Max Utility & Greedy & -- & 52 & -- & -- & 102543 & -- \\
    Max Utility & Matching & -- & 32 & -- & -- & 102685 & -- \\
    [5pt]
    DP Max Utility & Sampling & 89.0 & 96.0 & 99.0 & 94549.9 & 95394.5 & 95656.5 \\
    DP Max Utility & Greedy & 26.0 & 33.0 & 50.0 & 102015.0 & 102253.0 & 102527.0 \\
    DP Max Utility & Matching & 14.0 & 18.5 & 30.0 & 102357.0 & 102501.5 & 102671.0 \\
    [5pt]
    DP Max Utility & DP Sampling & 88.8 & 96.0 & 99.0 & 94665.2 & 95375.5 & 95693.5 \\
    DP Max Utility & DP Greedy & 27.0 & 34.0 & 53.0 & 102053.2 & 102289.6 & 102531.2 \\
    DP Max Utility & DP Matching & 14.9 & 18.5 & 28.0 & 102379.7 & 102512.6 & 102643.9 \\
    \bottomrule
\end{tabular}
        \caption{Reddit: the comparison is for $\ellmax = 100$.}
        \label{table:askreddit}
    \end{center}
\end{table}

\begin{table}
    \begin{center}
        \small
        \begin{tabular}{l c c c c c c r} 
    \toprule
    \multirow{2}{*}{Selection} & \multirow{2}{*}{Counting} & \multicolumn{3}{c}{Person Contribution Bound} & \multicolumn{3}{c}{Distinct Count} \\
    \cmidrule(lr){3-5}\cmidrule(lr){6-8}
    &  & 10th PC & Median & 90th PC & 10th PC & Median & 90th PC \\
    \midrule
    90th PC Contrib & DP Sampling & -- & 238 & -- & 95264.5 & 95593.5 & 95966.1 \\
    90th PC Contrib & DP Greedy & -- & 238 & -- & 98411.0 & 98734.0 & 99120.0 \\
    90th PC Contrib & DP Matching & -- & 238 & -- & 98354.2 & 98729.4 & 99164.2 \\
    [5pt]
    Max Utility & Sampling & -- & 29 & -- & 49907.8 & 50036.5 & 50195.3 \\
    Max Utility & Greedy & -- & 29 & -- & -- & 98459 & -- \\
    Max Utility & Matching & -- & 19 & -- & -- & 98712 & -- \\
    [5pt]
    DP Max Utility & Sampling & 29.0 & 29.0 & 29.0 & 49899.6 & 50070.5 & 50220.9 \\
    DP Max Utility & Greedy & 22.0 & 25.0 & 29.0 & 98244.0 & 98364.0 & 98459.0 \\
    DP Max Utility & Matching & 13.0 & 16.0 & 21.0 & 98586.0 & 98674.0 & 98721.0 \\
    [5pt]
    DP Max Utility & DP Sampling & 29.0 & 29.0 & 29.0 & 49924.2 & 50053.7 & 50211.9 \\
    DP Max Utility & DP Greedy & 20.0 & 26.0 & 29.0 & 98126.7 & 98369.6 & 98451.8 \\
    DP Max Utility & DP Matching & 12.0 & 16.0 & 21.0 & 98555.6 & 98670.4 & 98726.8 \\
    \bottomrule
\end{tabular}
        \caption{IMDB: the comparison is for $\ellmax = 30$.}
        \label{table:imdb}
    \end{center}
\end{table}
\section{Proofs}\label{sec:app:proofs}

\begin{lemma}[Sensitivity of $\bdistinctelements{D}{\ell}$]\label{lem:dcdl:sensitivity}
    As in \Cref{eq:dcdl}, for $D \in (\Omega^*)^n$ and $\ell \in \mathbb{N}$, define 
    \[
    \bdistinctelements{\database}{\ell} := 
    \max\left\{ 
        \left| 
            \bigcup_{i \in [n]} v_i 
        \right| : 
        \forall i \in [n] ~~ v_i \subseteq u_i \land |v_i| \le \ell 
    \right\}.
    \]
    Let $D,D' \in (\Omega^*)^*$ be neighboring. That is, $D$ and $D'$ differ only by the addition or removal of one entry.
    Then $\left| \bdistinctelements{D}{\ell} - \bdistinctelements{D'}{\ell} \right| \le \ell$ for all $\ell\in\mathbb{N}$.
\end{lemma}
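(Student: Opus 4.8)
The statement is a two-sided bound, so I would prove $0 \le \bdistinctelements{D'}{\ell} - \bdistinctelements{D}{\ell} \le \ell$ under the assumption that $D'$ is obtained from $D$ by \emph{adding} one person's record; the removal case is symmetric (just swap the roles of $D$ and $D'$), and together these give $|\bdistinctelements{D}{\ell} - \bdistinctelements{D'}{\ell}| \le \ell$. Throughout I work directly with the combinatorial definition \eqref{eq:dcdl} rather than the max-flow/matching reformulation, since the set-cover phrasing makes both inequalities transparent.

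\textbf{Monotonicity (lower bound).} Write $D = (u_1,\dots,u_n)$ and $D' = (u_1,\dots,u_n,u_{n+1})$. Given any family $(v_i)_{i\in[n]}$ that is feasible for $D$ (i.e. $v_i \subseteq u_i$ and $|v_i|\le\ell$), the extended family $(v_1,\dots,v_n,\emptyset)$ is feasible for $D'$ and achieves the same union size. Taking $(v_i)_{i\in[n]}$ to be an optimal family for $D$ shows $\bdistinctelements{D'}{\ell} \ge \bdistinctelements{D}{\ell}$.

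\textbf{Upper bound.} Let $(v_i)_{i\in[n+1]}$ be an optimal family for $D'$, so $\bdistinctelements{D'}{\ell} = \left|\bigcup_{i\in[n+1]} v_i\right|$. Restricting to the first $n$ people gives a family $(v_i)_{i\in[n]}$ that is feasible for $D$, and by the elementary inequality $|A \cup B| \ge |A| - |B|$ (equivalently $|A\cup B| \le |A| + |B|$ applied to $A = \bigcup_{i\in[n]}v_i$, $B = v_{n+1}$),
\[
\bdistinctelements{D}{\ell} \;\ge\; \left|\bigcup_{i\in[n]} v_i\right| \;\ge\; \left|\bigcup_{i\in[n+1]} v_i\right| - |v_{n+1}| \;\ge\; \bdistinctelements{D'}{\ell} - \ell,
\]
using $|v_{n+1}| \le \ell$ in the last step. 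Rearranging yields $\bdistinctelements{D'}{\ell} - \bdistinctelements{D}{\ell} \le \ell$, which combined with the previous paragraph completes the proof.

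\textbf{Anticipated difficulty.} There is essentially no technical obstacle here; the only thing to be careful about is the bookkeeping of the neighboring relation (addition versus removal) so that the symmetric case is handled explicitly, and making sure the optimum in \eqref{eq:dcdl} is actually attained — which it is, since each $u_i$ is finite and hence there are only finitely many feasible families.
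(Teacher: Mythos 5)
Your proof is correct and follows essentially the same route as the paper's: both directions are obtained by converting an optimal witness family for one dataset into a feasible family for the other (padding with $\emptyset$ for monotonicity, and dropping the extra person's set plus the bound $\left|\bigcup_i v_i\right| - |v_{n+1}| \le \left|\bigcup_{i\le n} v_i\right|$ for the other direction). No substantive differences from the paper's argument.
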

\begin{proof}
    Without loss of generality $D = (u_1, \cdots, u_n) \in (\Omega^*)^n$ and $D' = (u_1, \cdots, u_{n-1}) \in (\Omega^*)^{n-1}$. I.e., $D'$ is $D$ with person $n$ removed.
    Let $\ell \in \mathbb{N}$.
    
    Let $v_1 \subseteq u_1, \cdots, v_n \subseteq u_n$ and $v'_1 \subseteq u_1, \cdots, v'_{n-1} \subseteq u_{n-1}$ satisfy \[\forall i \in [n] ~ |v_i| \le \ell ~~~ \text{ and } ~~~ \left| \bigcup_{i \in [n]} v_i \right| = \bdistinctelements{D}{\ell}\] and \[\forall i \in [n-1] ~ |v'_i| \le \ell ~~~ \text{ and } ~~~ \left| \bigcup_{i \in [n-1]} v'_i \right| = \bdistinctelements{D'}{\ell}.\]
    
    We can convert the ``witness'' $v'_1, \cdots, v'_{n-1}$ for $D'$ into a ``witness'' for $D$ simply by adding the empty set.
    Define $v'_n = \emptyset$, which satisfies $v'_n \subseteq u_n$ and $|v'_n| \le \ell$. Then
    \begin{align*}
        \bdistinctelements{\database}{\ell} &= 
    \max\left\{ 
        \left| 
            \bigcup_{i \in [n]} v_i 
        \right| : 
        \forall i \in [n] ~~ v_i \subseteq u_i \land |v_i| \le \ell 
    \right\} \\
    &\ge \left| 
            \bigcup_{i \in [n]} v'_i 
        \right| \\
    &= \bdistinctelements{D'}{\ell}.
    \end{align*}
    
    Similarly, we can convert the ``witness'' $v_1, \cdots, v_{n}$ for $D$ into a ``witness'' for $D'$ simply by discarding $v_n$:
    \begin{align*}
        \bdistinctelements{\database'}{\ell} &= 
    \max\left\{ 
        \left| 
            \bigcup_{i \in [n-1]} v'_i 
        \right| : 
        \forall i \in [n-1] ~~ v'_i \subseteq u_i \land |v'_i| \le \ell 
    \right\} \\
    &\ge \left| 
            \bigcup_{i \in [n-1]} v_i 
        \right| \\
    &\ge \left| 
            \bigcup_{i \in [n]} v_i 
        \right| - |v_n| \\
    &\ge \bdistinctelements{D}{\ell} - \ell.
    \end{align*}
    Thus $\left| \bdistinctelements{D}{\ell} - \bdistinctelements{D'}{\ell} \right| \le \ell$, as required.
\end{proof}

\begin{proof}[Proof of \Cref{thm:main-matching}.]
\textbf{Privacy:}
    First note that $q_\ell(D) = \bdistinctelements{D}{\ell} - \frac{2\ell}{\varepsilon}\log(1/2\beta)$ has sensitivity $\ell$.
    \Cref{algorithm:matching} accesses the dataset via $q_\ell(D)$ in two ways: First it runs the generalized exponential mechanism to select $\hat\ell$ and second it computes $\hat\nu \gets q_{\hat{\ell}}(D) + \laplace{2\hat{\ell} / \varepsilon}$.
    Since the generalized exponential mechanism is $\varepsilon/2$-DP and adding Laplace noise is also $\varepsilon/2$-DP, the overall algorithm is $\varepsilon$-DP by composition.
    
\textbf{Lower bound:}
    Since $\hat\nu \gets q_{\hat\ell}(D) + \laplace{2\hat{\ell} / \varepsilon}$, we have
    \begin{equation}
        \pr{\hat\nu}{\hat\nu \le q_{\hat\ell}(D) + \frac{2\hat\ell}{\varepsilon} \log\left(\frac{1}{2\beta}\right)  } = \pr{\hat\nu}{\hat\nu \ge q_{\hat\ell}(D) - \frac{2\hat\ell}{\varepsilon} \log\left(\frac{1}{2\beta}\right)  } = 1-\beta.\label{eq:lap-matching-proof1}
    \end{equation}
    Since $\bdistinctelements{D}{\hat\ell}\le\distinctelements{D}$ and $q_\ell(D) = \bdistinctelements{D}{\ell} - \frac{2\ell}{\varepsilon}\log(1/2\beta)$, \Cref{eq:lap-matching-proof1} gives
    \[\pr{\hat\nu}{\hat\nu \le \distinctelements{D}  } \ge \pr{\hat\nu}{\hat\nu \le \bdistinctelements{D}{\hat\ell}  } = 1-\beta. \]
    This is the guarantee of \Cref{eq:thm:main1}; $\hat\nu$ is a lower bound on $\distinctelements{D}$ with probability $\ge1-\beta$.
    
\textbf{Upper bound:}
    The accuracy guarantee of the generalized exponential mechanism (\Cref{thm:gem}) is \begin{equation}\pr{\hat\ell}{q_{\hat{\ell}}(D) \ge \max_{\ell\in[\ellmax]} q_\ell(D) - \ell \cdot \frac{4}{\varepsilon/2} \log(\ellmax/\beta)} \ge 1-\beta.\label{eq:matching-proof:gem}\end{equation}
    Combining \Cref{eq:lap-matching-proof1,eq:matching-proof:gem} with a union bound yields
    \begin{equation}
            \pr{(\hat\ell,\hat\nu) \gets \mechanism(\database)}{\hat\nu \ge 
                \max_{\ell \in [\ellmax]} \bdistinctelements{\database}{\ell} - \frac{2\ell+2\hat\ell}{\varepsilon} \log\left(\frac{1}{2\beta}\right) - \frac{8\ell}{\varepsilon} \log\left(\frac{\ellmax}{\beta}\right) } \ge 1 - 2\beta.\label{eq:main:proof:waa}
    \end{equation}
    To interpret \Cref{eq:main:proof:waa} we need a high-probability upper bound on $\hat\ell$.
    Let $A > 0$ be determined later and define 
    \begin{equation}
        \ell_A^* := \argmax_{\ell \in [\ellmax]} ~ \bdistinctelements{\database}{\ell} - \frac{A\ell}{\varepsilon} , \label{eq:proof:linbound}
    \end{equation}
    so that $\bdistinctelements{D}{\ell} \le \bdistinctelements{D}{\ell_A^*} +(\ell-\ell_A^*)\frac{A}{\varepsilon}$ for all $\ell \in [\ellmax]$.
    
    Assume the event in \Cref{eq:matching-proof:gem} happens. 
    We have
    \begin{align*}
        \bdistinctelements{D}{\hat\ell} - \frac{2\hat\ell}{\varepsilon} \log\left(\frac{1}{2\beta}\right) &\le \bdistinctelements{D}{\ell_A^*} + (\hat\ell-\ell_A^*)\frac{A}{\varepsilon} - \frac{2\hat\ell}{\varepsilon} \log\left(\frac{1}{2\beta}\right), \tag{by \Cref{eq:proof:linbound}}\\
        \bdistinctelements{D}{\hat\ell} - \frac{2\hat\ell}{\varepsilon} \log\left(\frac{1}{2\beta}\right) = q_{\hat\ell}(D) &\ge \max_{\ell\in[\ellmax]} q_\ell(D) - \ell \cdot \frac{4}{\varepsilon/2} \log(\ellmax/\beta) \tag{by assumption} \\
        &= \max_{\ell\in[\ellmax]} \bdistinctelements{D}{\ell}- \frac{2\ell}{\varepsilon} \log\left(\frac{1}{2\beta}\right)  - \frac{8\ell}{\varepsilon} \log\left(\frac{\ellmax}{\beta}\right) \\
        &\ge \bdistinctelements{D}{\ell_A^*}- \frac{2\ell_A^*}{\varepsilon} \log\left(\frac{1}{2\beta}\right)  - \frac{8\ell_A^*}{\varepsilon} \log\left(\frac{\ellmax}{\beta}\right) .
    \end{align*}
    Combining inequalities yields
    \begin{equation}
        \bdistinctelements{D}{\ell_A^*}- \frac{2\ell_A^*}{\varepsilon} \log\left(\frac{1}{2\beta}\right)  - \frac{8\ell_A^*}{\varepsilon} \log\left(\frac{\ellmax}{\beta}\right) \le \bdistinctelements{D}{\ell_A^*} + (\hat\ell-\ell_A^*)\frac{A}{\varepsilon} - \frac{2\hat\ell}{\varepsilon} \log\left(\frac{1}{2\beta}\right),
    \end{equation}
    which simplifies to
    \begin{equation}
        \hat\ell \cdot \left( 2\log\left(\frac{1}{2\beta}\right) - A \right) \le \ell_A^* \cdot \left( 2 \log\left(\frac{1}{2\beta}\right) + 8 \log\left(\frac{\ellmax}{\beta}\right) - A \right).
    \end{equation}
    Now we set $A = \log\left(\frac{1}{2\beta}\right)$ to obtain
     \begin{equation}
        \hat\ell \cdot \log\left(\frac{1}{2\beta}\right) \le \ell_A^* \cdot \left( \log\left(\frac{1}{2\beta}\right) + 8 \log\left(\frac{\ellmax}{\beta}\right) \right).\label{eq:main:proof:igg}
    \end{equation}   
    Substituting \Cref{eq:main:proof:igg} into \Cref{eq:main:proof:waa} gives
    \begin{equation}
            \pr{(\hat\ell,\hat\nu) \gets \mechanism(\database)}{\hat\nu \ge 
                \max_{\ell \in [\ellmax]} \bdistinctelements{\database}{\ell} - \frac{2\ell+2\ell_A^*}{\varepsilon} \log\left(\frac{1}{2\beta}\right) - \frac{8\ell+16\ell_A^*}{\varepsilon} \log\left(\frac{\ellmax}{\beta}\right) } \ge 1 - 2\beta.\label{eq:main:proof:waa2}
    \end{equation}
    We simplify \Cref{eq:main:proof:waa2} using $\log\left(\frac{1}{2\beta}\right) \le \log\left(\frac{\ellmax}{\beta}\right)$ to obtain \Cref{eq:main:upper}.
    
\textbf{Computational efficiency:}
    Finally, the runtime of $\Call{DPDistinctCount}{D}$ is dominated by $\ellmax$ calls to the $\Call{SensitiveDistinctCount}{D}$ 
    subroutine, which computes the maximum size of a bipartite matching on a graph with 
    $|E|=\sum_{i \in [n]} |u_i| \cdot \min\{\ell,|u_i|\} \le |D| \cdot \ellmax$ edges and 
    $|V|+|U|=\distinctelements{D}+\sum_{i \in [n]} \min\{\ell,|u_i|\} \le 2|D|$ vertices.
    The Hopcroft-Karp-Karzanov algorithm runs in time $O(|E|\cdot\sqrt{|V|+|U|}) \le O(|D|^{1.5} \cdot \ellmax)$ time.
\end{proof}

\begin{proof}[Proof of \Cref{thm:main-greedy}.]
    
\textbf{Privacy:}
    For convenience we analyze \Cref{algorithm:greedy-inefficient}, which produces the same result as \Cref{algorithm:greedy}. (The difference is that \Cref{algorithm:greedy-inefficient} is written to be efficient, while \Cref{algorithm:greedy} is written in a redundant manner to make the subroutine we are analyzing clear.)
    \begin{algorithm}
        \caption{Approximate Distinct Count Algorithm}\label{algorithm:greedy-inefficient}
        \begin{algorithmic}[1]
            \Procedure{SensitiveApproxDistinctCount}{$\database\!=\!(u_1, \cdots, u_n)\!\in\!(\Omega^*)^n$;  $\ell\!\in\!\mathbb{N}$}
                \State{$S \gets \emptyset$.}
                \For{$\ell' \in [\ell]$}
                    \For{$i \in [n]$ with $u_i \setminus S \ne \emptyset$}
                        \State{Choose lexicographically first $v \in u_i \setminus S$.}\Comment{Match $(i,\ell')$ to $v$.}
                        \State{Update $S \gets S \cup \{v\}$.}
                    \EndFor
                \EndFor
                \State{\Return{$|S|$}}
            \EndProcedure
            \Procedure{DPApproxDistinctCount}{$\database\!=\!(u_1, \cdots, u_n)\!\in\!(\Omega^*)^n$; $\ellmax\!\in\!\mathbb{N}$, $\varepsilon\!>\!0$, $\beta\!\in\!(0,\!\tfrac12)$}
                \For{$\ell \in [\ellmax]$}
                    \State{Define $q_\ell(\database) := \Call{SensitiveApproxDistinctCount}{\database;\ell} - \frac{2\ell}{\varepsilon} \cdot \log\left(\frac{1}{2\beta}\right)$.}
                \EndFor
                \State{$\hat{\ell} \gets
                \Call{GEM}{\database;\{q_\ell\}_{\ell \in [\ellmax]}, \{\ell\}_{\ell \in [\ellmax]}, \varepsilon / 2, \beta}$.}
                \Comment{\Cref{algorithm:generalized-exponential-mechanism}}
                \State{$\hat\nu \gets q_{\hat{\ell}}(D) + \laplace{2\hat{\ell} / \varepsilon}$.}
                \State{\Return{$(\hat{\ell}, \hat\nu) \in [\ellmax] \times \R$.}}
            \EndProcedure
        \end{algorithmic}
    \end{algorithm}
    The privacy of the algorithm follows by composing the privacy guarantees of the generalized exponential mechanism and Laplace noise addition. The only missing part is to prove that $\Call{SensitiveApproxDistinctCount}{\cdot, \ell}$ has sensitivity $\ell$. 
    
    Note that 
    \begin{align*}
        &\Call{SensitiveApproxDistinctCount}{(u_1, \cdots, u_n), \ell} \\ &~~~~~=
        \Call{SensitiveApproxDistinctCount}{
            (
                \underbrace{
                    u_1, \cdots, u_n, 
                    \cdots, 
                    u_1, \cdots, u_n
                }_{\textrm{$\ell$ times}}
            ), 
            1
        };
    \end{align*}
    therefore, it is enough to prove that $\Call{SensitiveApproxDistinctCount}{\cdot, 1}$ has sensitivity $1$.
    
    Assume $D = (u_1, \cdots, u_n)$ and $D' = (u_1, \cdots, u_{j - 1}, u_{j + 1}, \cdots, u_n)$.
    Let $S_1, \cdots, S_n$ and $v_1, \cdots, v_n$ be the states of $S$ and $v$ 
    ($v_i = \perp$ if $u_i \setminus S_{i-1} = \emptyset$), respectively, when we run $\Call{ApproxSensitiveDistinctCount}{D,1}$.
    Similarly, let 
    $S'_1, \cdots, S'_{j-1}, S'_{j+1}, \cdots, S'_{n}$, $v'_1, \cdots, v'_{j-1}, v'_{j+1}, \cdots, v'_{n}$ be the states of $S$ and $v$ 
    ($v'_i = \perp$ if $u_i \setminus S'_{i-1} = \emptyset$),\footnote{For notational convenience, we define $\{\bot\}=\emptyset$.} respectively, when we run $\Call{ApproxSensitiveDistinctCount}{D',1}$.
    Our goal is to show that $||S_n|-|S'_{n}||\le1$.
    Define $S'_j = S'_{j-1}$ and $v'_j=\bot$.
    Clearly $S'_i=S_i$ and $v'_i=v_i$ for all $i < j$.
    For $i\ge j$, we claim that $S'_i \subseteq S_i$ and $|S_i| \le |S'_i|+1$.
    This is true for $i = j$, since $S'_j = S'_{j-1} = S_{j-1}$ and $S_j = S_{j-1} \cup \{v_j\}$.
    We prove the claim for $i > j$ by induction. I.e., assume $S_{i-1} =S'_{i-1} \cup \{v_{i-1}^*\}$ for some $v_{i-1}^*$ (possibly $v_{i-1}^*=\bot$, whence $S_{i-1}=S'_{i-1}$).
    Now $v_i$ is the lexicographically first element of $u_i \setminus S_{i-1}$ and $v'_i$ is the lexicographically first element of $u_i \setminus S'_{i-1}$ (or $\bot$ if these sets are empty).
    By the induction assumption, $u_i \setminus S_{i-1} \subseteq u_i \setminus S'_{i-1} \subseteq (u_i \setminus S_{i-1}) \cup \{v_{i-1}^*\}$. Thus either $v'_i = v_i$ or $v'_i = v_{i-1}^*$.
    If $v'_i=v_i$, then $S'_i = S'_{i-1} \cup \{v'_i\} \subset S_{i-1} \cup \{v_i\} = S_i$ and $S'_i \cup \{v_{i-1}^*\} = S'_{i-1} \cup \{v'_i,v_{i-1}^*\} \supset S_{i-1} \cup \{v_i\} = S_i$.
    If $v'_i=v_{i-1}^*$, then $S'_i = S'_{i-1} \cup \{v_{i-1}^*\} \subset S_{i-1} \cup \{v_{i-1}^*\} = S_{i-1} \subset S_i$ and $S'_i \cup \{v_i\} = S'_{i-1} \cup \{v_i,v_{i-1}^*\} \supset S_{i-1} \cup \{v_i\} = S_i$, so the claim holds with $v_i^*=v_i$.
    
    The sensitivity bound implies that $q_\ell(D) = |S| - \frac{2\ell}{\varepsilon}\log(1/2\beta)$ has sensitivity $\ell$.
    Since the generalized exponential mechanism is $\varepsilon / 2$-DP and adding Laplace noise is also $\varepsilon / 2$-DP, the overall algorithm is $\varepsilon$-DP by composition.
    
\textbf{Lower bound:}
    Let us denote by $\approxbdistinctelements{D}{\ell} = \Call{SensitiveApproxDistinctCount}{\cdot, \ell}$ the value of $|S|$ we obtain on \Cref{line:max-matching} in \Cref{algorithm:greedy}. 
    Note that $\approxbdistinctelements{D}{\ell}$ is the size of a maximal matching in $G_\ell$, where $G_\ell$ is the bipartite graph corresponding to the input $D$ with $\ell$ copies of each person (see \Cref{algorithm:matching} for a formal description of the graph). 
    Since a maximal matching is a 2-approximation to a maximum matching \cite{maximal2maximum}, we have
    \begin{equation}
        \frac12 \bdistinctelements{D}{\ell} \le \approxbdistinctelements{D}{\ell} \le \bdistinctelements{D}{\ell} \le \distinctelements{D}.\label{eq:thm2proof:2approx}
    \end{equation}
    
    Since $\hat\nu \gets q_{\hat\ell}(D) + \laplace{2\hat{\ell} / \varepsilon}$, we have
    \begin{equation}\label{eq:lap-greedy-proof1}
        \pr{\hat\nu}{\hat\nu \le q_{\hat\ell}(D) + \frac{2\hat\ell}{\varepsilon} \log\left(\frac{1}{2\beta}\right)  } = \pr{\hat\nu}{\hat\nu \ge q_{\hat\ell}(D) - \frac{2\hat\ell}{\varepsilon} \log\left(\frac{1}{2\beta}\right)  } = 1-\beta.
    \end{equation}
    Substituting $q_\ell(D) = \approxbdistinctelements{D}{\ell} - \frac{2\ell}{\varepsilon}\log(1/2\beta)$ into \Cref{eq:lap-greedy-proof1,eq:thm2proof:2approx} gives
    \Cref{eq:thm:greedy1} \[\pr{\hat\nu}{\hat\nu \le \distinctelements{D} } \ge \pr{\hat\nu}{\hat\nu \le \approxbdistinctelements{D}{\hat\ell}  } = 1-\beta.\]

\textbf{Upper bound:}    
    The accuracy guarantee of the generalized exponential mechanism (\Cref{thm:gem}) is 
    \begin{equation}
        \pr{\hat\ell}{q_{\hat{\ell}}(D) \ge 
        \max_{\ell\in[\ellmax]} q_\ell(D) - \ell \cdot \frac{4}{\varepsilon/2} \log(\ellmax/\beta)} \ge 
        1-\beta.\label{eq:thm2proof:ub:gem}
    \end{equation}
    Combining \Cref{eq:thm2proof:ub:gem,eq:lap-greedy-proof1} and the definition of $q_\ell(D)$ with a union bound yields
    \begin{equation}\label{eq:greedy:proof:waa}
            \pr{(\hat\ell,\hat\nu) \gets \mechanism(\database)}{
                \hat\nu \ge 
                \max_{\ell \in [\ellmax]} \approxbdistinctelements{\database}{\ell} - 
                    \frac{2\ell+2\hat\ell}{\varepsilon} \log\left(\frac{1}{2\beta}\right) - 
                    \frac{8\ell}{\varepsilon} \log\left(\frac{\ellmax}{\beta}\right) 
            } \ge 
            1 - 2\beta.
    \end{equation}
    
    Now we assume $\max_{i\in[n]} |u_i| \le \ell_* \le \ellmax$ for some $\ell_*$. This implies $\distinctelements{\database} = \bdistinctelements{\database}{\ell}$. 
    We have
    \begin{align}
        &\!\max_{\ell \in [\ellmax]} \approxbdistinctelements{\database}{\ell} - \frac{2\ell+2\hat\ell}{\varepsilon} \log\left(\frac{1}{2\beta}\right) - \frac{8\ell}{\varepsilon} \log\left(\frac{\ellmax}{\beta}\right) \\
        &\ge \approxbdistinctelements{\database}{\ell_*} - \frac{2\ell_*+2\hat\ell}{\varepsilon} \log\left(\frac{1}{2\beta}\right) - \frac{8\ell_*}{\varepsilon} \log\left(\frac{\ellmax}{\beta}\right) \\
        &\ge \frac12 \distinctelements{\database} - \frac{2\ell_*+2\hat\ell}{\varepsilon} \log\left(\frac{1}{2\beta}\right) - \frac{8\ell_*}{\varepsilon} \log\left(\frac{\ellmax}{\beta}\right).
    \end{align}
    As in the proof of \Cref{thm:main-matching}, if the event in \Cref{eq:thm2proof:ub:gem} happens, we can show that 
    \begin{equation}
        \hat\ell \cdot \log\left(\frac{1}{2\beta}\right) \le \ell^*_A \cdot \left( \log\left(\frac{1}{2\beta}\right) + 8 \log\left(\frac{\ellmax}{\beta}\right) \right),
    \end{equation}
    where
    \begin{equation}
        \ell^*_A := \argmax_{\ell \in [\ellmax]} \approxbdistinctelements{D}{\ell} - \frac{\ell}{\varepsilon} \log\left(\frac{1}{2\beta}\right)
    \end{equation}
    Note that $\ell^*_A \le \ell_*$, since $\approxbdistinctelements{D}{\ell} \le \approxbdistinctelements{D}{\ell_*}$ for all $\ell \in [\ellmax]$. (That is simply to say that the size of the maximal matching cannot be increased by making more copies of a vertex once there is one copy for each neighbor.)
    Combining bounds yields
    \begin{equation}
             \pr{(\hat\ell,\hat\nu) \gets \mechanism(\database)}{
                \hat\nu \ge 
                \frac12 \distinctelements{\database} - \frac{4\ell_*}{\varepsilon} \log\left(\frac{1}{2\beta}\right) - \frac{24\ell_*}{\varepsilon} \log\left(\frac{\ellmax}{\beta}\right)
            } \ge 
            1 - 2\beta.  
    \end{equation}
    Since $\log\left(\frac{1}{2\beta}\right) \le \log\left(\frac{\ellmax}{\beta}\right)$, this implies \Cref{eq:greedy:upper}.
    
\textbf{Computational efficiency:}
    It only remains to verify that \Cref{algorithm:greedy} can be implemented in $O(|D|+\ellmax \log \ellmax)$ time.
    We can implement $S$ using a hash table to ensure that we can add an element or query membership of an element in constant time. (We can easily maintain a counter for the size of $S$.)
    We assume $D$ is presented as a linked list of linked lists representing each $u_i$ and furthermore that the linked lists $u_i$ are sorted in lexicographic order.
    The outer loop proceeds through the linked list for $D=(u_1, \cdots, u_n)$. For each $u_i$, we simply pop elements from the linked list and check if they are in $S$ until either we find $v \in u_i \setminus S$ (and add $v$ to $S$) or $u_i$ becomes empty (in which case we remove it from the linked list for $D$.)
    Since each iteration decrements $|D|$, the runtime of the main loop is $O(|D|)$. Running the generalized exponential mechanism (Algorithm \ref{algorithm:generalized-exponential-mechanism}) takes $O(\ellmax \log \ellmax)$ time.
\end{proof}

\section{Implementing the Generalized Exponential Mechanism}

We conclude with some remarks about implementing the generalized exponential mechanism of \citet{RS15:lipschitz-extensions} given in \Cref{algorithm:generalized-exponential-mechanism}.
There are two parts to this; first we must compute the normalized scores and then we must run the standard exponential mechanism.

Implementing the standard exponential mechanism is well-studied \cite{ilvento2020implementing} and can be performed in linear time (with some caveats about randomness and precision).
We remark that, instead of the exponential mechanism, we can use report-noisy-max or permute-and-flip \cite{mckenna2020permute,ding2021permute,ding2023floating}.
These variants may be easier to implement and may provide better utility too.

The normalized scores are given by
\begin{equation}
    \forall i \in [m] ~~~~ s_i = \min_{j \in [m]} \frac{(q_i(D)-t\Delta_i)-(q_j(D)-t\Delta_j)}{\Delta_i+\Delta_j},
\end{equation}
where $q_i(D)$ is the value of the query on the dataset, $\Delta_i>0$ is the sensitivity of $q_i$, and $t$ is a constant.
Na\"ively it would take $\Theta(m^2)$ time to compute all the normalized scores.
However, a more complex algorithm can compute the scores in $O(m \log m)$ time.

Observe that, for each $i \in [m]$, we have
\begin{align*}
    s_i = \min_{j \in [m]} \frac{(q_i(D)-t\Delta_i)-(q_j(D)-t\Delta_j)}{\Delta_i+\Delta_j} 
    &\iff \min_{j \in [m]} \frac{(q_i(D)-t\Delta_i)-(q_j(D)-t\Delta_j) - s_i(\Delta_i+\Delta_j)}{\Delta_i+\Delta_j} = 0\\
    &\iff \min_{j \in [m]} {(q_i(D)-t\Delta_i)-(q_j(D)-t\Delta_j) - s_i(\Delta_i+\Delta_j)} = 0\\
    &\iff q_i(D)-(s_i+t)\Delta_i = \underbrace{ \max_{j \in [m]} q_j(D) + (s_i-t)\Delta_j }_{f(s_i-t)}.
\end{align*}
That is, we can compute $s_i$ by solving the equation $q_i(D)-(s_i+t)\Delta_i = f(s_i-t)$.

Since $f(x) := \max_{j \in [m]} q_j(D) + x \Delta_j$ is the maximum of increasing linear functions, we have that $f$ is a convex, increasing, piecewise-linear function, with at most $m$ pieces.
We can represent $f$ by a sorted list of the points where the linear pieces connect, along with the linear function on each of the pieces.
We can compute this representation of $f$ as a pre-processing step in $O(m \log m)$ time; we sort the lines $y(x) = q_j(D)+x\Delta_j$ by their slope $\Delta_j$ and then compute the intersection points between consecutive lines (we delete lines that never realize the max).

Given the above representation of $f$ and the values $q_i(D),\Delta_i,t$, we can compute $s_i$ in $O(\log m)$ time. 
We must solve $q_i(D)-(s_i+t)\Delta_i = f(s_i-t)$ for $s_i$. We can perform binary search on the pieces of $f$ to identify $j$ such that $f(s_i-t) = q_j(D) + (s_i-t)\Delta_j$. Once we have this we can directly compute $s_i = \frac{(q_i(D)-t\Delta_i)-(q_j(D)-t\Delta_j)}{\Delta_i+\Delta_j}$.
The binary search takes $O(\log m)$ time and we must compute $m$ scores. Thus the overall runtime (including the pre-processing) is $O(m \log m)$.

\end{document}